\def\amsbb{\use@mathgroup \M@U \symAMSb}
\definecolor{darkred}{RGB}{200, 0, 0}
\definecolor{darkgreen}{RGB}{0, 100, 0}
\definecolor{darkblue}{RGB}{0, 0, 200}
\newcommand{\nbox}[2][9]{\hspace{#1pt} \mbox{#2} \hspace{#1pt}}
\newtheorem{prop}{Proposition}[section]
\DeclareMathOperator{\tr}{tr}
\def \diracspacing {0.7pt}
\newcommand{\ket}[1]{| \hspace{\diracspacing} #1 \rangle}
\newcommand{\ketbra}[2]{| \hspace{\diracspacing} #1 \rangle \langle #2 \hspace{\diracspacing} |} 
\newcommand{\ketbraq}[1]{\ketbra{#1}{#1}}
\newcommand{\norm}[2][]{#1| \! #1| #2 #1| \! #1|}
\newcommand{\abs}[2][]{#1| #2 #1|}
\newcommand{\cS}{\mathcal{S}}
\newcommand{\comsum}[1]
{
\sum_{x \in \cS_{n}^{#1}} \bigotimes_{j = 1}^{n} \bigg( \frac{ \abs{ [ A_{0}^{j}, A_{1}^{j} ] } }{2} \bigg)^{x_{j}}
}
\newcommand{\biasedCHSH}[2]
{
\begin{equation*}
W_{#1} := #1 ( A_{0} + A_{1} ) \otimes B_{0} + ( A_{0} - A_{1} ) \otimes B_{1}#2
\end{equation*}
}
\newcommand{\biasedsquaredinequality}
{
\begin{equation*}
W_{\alpha}^{2} \leq 2 ( \alpha^{2} + 1 ) \cdot \mathbb{1} \otimes \mathbb{1} + T_{\alpha} \otimes \mathbb{1}
\end{equation*}
for
\begin{equation*}
T_{\alpha} := (\alpha^{2} - 1) \{ A_{0}, A_{1} \} + 2 \alpha \abs{ [ A_{0}, A_{1} ] }.
\end{equation*}
}
\newcommand{\tradeoffchsh}
{
\begin{equation*}
\beta_{\alpha} \leq 2 \sqrt{ \alpha^{2} + t_{\alpha} }.
\end{equation*}
}
\newcommand{\tradeoffmabk}[1]
{
\begin{equation*}
\beta_{n} \leq \sqrt{2^{n - 2}} \cdot \sqrt{ 1 + t_{k} }#1
\end{equation*}
}
\begin{document}
\title{Self-testing of binary observables based on commutation}
\author{J\k{e}drzej Kaniewski}
\affiliation{QMATH, Department of Mathematical Sciences, University of Copenhagen, Universitetsparken 5, 2100 Copenhagen, Denmark}
\date{\today}
\begin{abstract}
We consider the problem of certifying binary observables based on a Bell inequality violation alone, a task known as \emph{self-testing of measurements}. We introduce a family of commutation-based measures, which encode all the distinct arrangements of two projective observables on a qubit. These quantities by construction take into account the usual limitations of self-testing and since they are ``weighted'' by the (reduced) state, they automatically deal with rank-deficient reduced density matrices. We show that these measures can be estimated from the observed Bell violation in several scenarios and the proofs rely only on standard linear algebra. The trade-offs turn out to be tight and, in particular, they give non-trivial statements for arbitrarily small violations. On the other extreme, observing the maximal violation allows us to deduce precisely the form of the observables, which immediately leads to a complete rigidity statement. In particular, we show that for all $n \geq 3$ the $n$-partite Mermin-Ardehali-Belinskii-Klyshko inequality self-tests the $n$-partite Greenberger-Horne-Zeilinger state and maximally incompatible qubit measurements on every party. Our results imply that any pair of projective observables on a qubit can be certified in a truly robust manner. Finally, we show that commutation-based measures give a convenient way of expressing relations among more than two observables.
\end{abstract}
\maketitle
\section{Introduction}The fact that quantum mechanics is incompatible with the concept of local realism~\cite{bell64a} is arguably one of the most surprising features of the quantum world. It should therefore come as no surprise that Bell nonlocality is an attractive field of research for both theoreticians (see~Ref.~\cite{brunner14a} for a review) and experimentalists (see e.g.~the recent loophole-free Bell tests~\cite{hensen15a, giustina15a, shalm15a, hensen16a}). An important practical application of Bell nonlocality is \emph{device-independent quantum cryptography}, whose goal is to prove the security of protocols executed using potentially untrusted devices (see Refs.~\cite{barrett05a, acin06a, colbeck06a, acin07a, colbeck11a} for the early contributions and Ref.~\cite{ekert14a} for a relatively up-to-date review). What makes this task possible is the fact that observing nonlocal correlations allows us to draw conclusions about the inner workings of the untrusted devices. In fact, certain extremal quantum correlations identify exactly the quantum system under consideration (up to well-understood equivalences). For example the only manner to achieve the maximal violation of the Clauser-Horne-Shimony-Holt (CHSH)~\cite{clauser69a} inequality is to perform anticommuting measurements on the maximally entangled state of two qubits~\cite{summers87a, tsirelson93a, popescu92a}. Therefore, not only can we use a Bell test to reject local realism, but if we assume that quantum mechanics provides the correct description of our system, we can essentially identify what it is. Mayers and Yao realised that this allows us to certify quantum devices under minimal assumptions and they also coined the term \emph{self-testing}~\cite{mayers98a, mayers04a}. The general question is simple: ``We have conducted a Bell test and observed certain nonlocal correlations. What can we rigorously deduce about the state shared between the devices and the measurements performed?''.

The first self-testing results only applied in the case of observing the ideal statistics. While interesting from the foundational point of view, this is not sufficient for practical applications, because in a real-world experiment one never observes ideal statistics for at least two reasons: (i) no physical system can be perfectly calibrated and shielded from external noise and (ii) one can only perform a finite number of test runs, so we can only hope to \emph{estimate} the actual probabilities. In order to make statements relevant for experiments we must make them \emph{robust}, i.e.~we have to show that if the observed statistics are close to the ideal ones, then the quantum device should be close (in some well-defined sense) to the perfect realisation. To simplify the problem, instead of looking at the entire probability distribution, we often only look at the violation of some fixed Bell inequality. A Bell inequality is given by a vector of real coefficients $c_{abxy} \in \amsbb{R}$ and the observed Bell value $\beta$ is calculated as
\begin{equation}
\label{eq:bell-inequality}
\beta := \sum_{abxy} c_{abxy} \Pr[a, b | x, y],
\end{equation}
where $\Pr[a, b | x, y]$ is the probability of observing outputs $a, b$ given inputs $x, y$. Let $\beta^{L}$ and $\beta^{Q}$ be the largest values achievable by local-realistic theories and quantum mechanics, respectively, and suppose that $\beta^{L} < \beta^{Q}$. A necessary condition to make a self-testing statement is to observe some violation ($\beta > \beta_{L}$) and a self-testing result is called robust if we can make conclusions even if the violation is not maximal ($\beta < \beta^{Q}$). It is important to distinguish self-testing results which only apply if the violation is close to maximal from those that cover a sizeable portion of the interval $[\beta^{L}, \beta^{Q}]$. The two types can be easily distinguished by writing the observed violation $\beta$ as a convex combination of $\beta^{L}$ and $\beta^{Q}$
\begin{equation*}
\beta = c \beta^{L} + (1 - c) \beta^{Q}
\end{equation*}
and estimating the largest value of $c$ for which we can still make a non-trivial self-testing statement, which we denote $c^{*}$. For all the self-testing results we are aware of we have either $c^{*} \leq 10^{-4}$ or $c^{*} \geq 10^{-1}$. Results of the first kind aim to self-test high-dimensional quantum systems and often have complexity-theoretic implications. Results of the second kind, on the other hand, usually consider simple (low-dimensional) quantum systems and their goal is to derive statements which can be applied to real-world experiments, i.e.~they might actually be useful in designing robust and efficient testing procedures for real devices. Deriving such experimentally-relevant self-testing statements is precisely the focus of this work.

The main challenge in deriving robust self-testing statements lies in finding a natural mathematical formulation of the problem. Since our goal is to make statements even for statistics significantly differing from the ideal setup, we cannot aim for a complete description. We should instead pin down the relevant property and certify precisely that property. This is how our approach differs from the standard formulation, in which one attempts to certify closeness (in trace distance) to the perfect realisation.

The primary goal of this work is to certify two-outcome (binary) projective measurements.\footnote{Since any non-projective measurement can be simulated by a projective measurement on a larger Hilbert space, one can never certify ``non-projectiveness'' of a measurement (unless one imposes an additional dimension bound). This is similar to the well-known fact that mixed states cannot be self-tested.} We propose a novel formulation based on commutation, which recovers several previous results as extreme cases. Commutation-based measures are easily computable, have a simple physical interpretation and demonstrate that all pairs of projective observables on a qubit can be certified in a truly robust fashion. Although, as is common in self-testing and quantum cryptography, we assume that the systems under study are finite-dimensional, as outlined in Appendix~\ref{app:biased-chsh} the results hold essentially unchanged in the infinite-dimensional case.

While self-testing of quantum states has received significant attention in the regime of small~\cite{slofstra11a, mckague12a, yang13a, mckague14a, bamps15a, mckague16a, supic16a, ostrev16a, coladangelo17b, coudron16a, mckague16b, chao16a, natarajan16a, coladangelo17a, chao17a} and experimentally-relevant~\cite{bardyn09a, yang14a, wu14a, pal14a, bancal15a, wu16a, kaniewski16b} robustness, self-testing of measurements is a significantly less studied topic. Although most results in the small robustness regime come as complete rigidity statements (i.e.~they also characterise the optimal measurements), there are only two results exhibiting experimentally-relevant robustness. Bancal et al.~used semidefinite programming to study complementarity of observables in the CHSH case~\cite{bancal15a}, which is a valuable contribution and we see our approach as a refined formulation of the problem which allows for a simple analytic treatment. The second relevant result relates a particular measure of incompatibility between measurements to the observed Bell violation~\cite{cavalcanti16a, chen16a}. This is, however, slightly unsatisfactory, because the quantity considered is a generic measure of incompatibility and cannot be interpreted as a distance to some well-defined ideal arrangement of observables.

\section{Self-testing of observables based on commutation}
The first manner in which we depart from the usual formulation of the self-testing problem, is that we certify observables of one party at a time, i.e.~we have a separate statement for each party, which depends only on the local observables and the reduced state. This is in line with the idea of focusing on a single property (the incompatibility of the observables of Alice), instead of certifying the whole setup (the observables of Alice, the observables of Bob and the shared state) at the same time.

To clarify what kind of statements we can hope for, it is instructive to understand the inherent limitations of self-testing. The two inherent limitations of self-testing (i.e.~properties that cannot be deduced from the outcome statistics) are: the presence of auxiliary degrees of freedom (i.e.~degrees of freedom upon which all measurement operators act trivially) and the application of local unitaries. It is clear that these two equivalences do not affect commutation relations between observables. We might therefore conclude that what we should be certifying is precisely the commutation structure between the observables. This is, however, not quite correct as we can only make statements about the observables \emph{on the support of the (reduced) state}. Therefore, instead of making statements about the observables, we consider scalar quantities of the form $t := \tr (T \rho_{A})$, where $\rho_{A}$ is the reduced state of the subsystem to be measured and $T$ is a Hermitian operator constructed from the observables (whose exact definition depends on the commutation structure we wish to certify).

An appealing feature of these measures is the fact that the maximal value of $t$ is achieved by essentially just one arrangement of observables, i.e.~in the extreme case we can determine the exact form of the observables on the support of the state (up to the aforementioned equivalences). We stress that whenever we make a statement directly about the operators, we implicitly assume that the reduced state is full-rank.

Effective (i.e.~state-dependent) commutation relations are important because they lead to well-understood uncertainty relations~\cite{tomamichel13a, kaniewski14a}. Such \emph{device-independent uncertainty relations}, which certify uncertainty given only limited knowledge about the device, are fundamental as they pin down the exact source of uncertainty in quantum mechanics. Moreover, they have already been used to prove the security of device-independent quantum key distribution~\cite{lim13a} and two-party cryptography~\cite{kaniewski16a, ribeiro16a, ribeiro16b}.

\subsection{Methods}A binary observable is a Hermitian operator $A$ satisfying $- \mathbb{1} \leq A \leq \mathbb{1}$ (or equivalently $A^{2} \leq \mathbb{1}$). An observable is projective if $A^{2} = \mathbb{1}$, but we do not a priori assume this. Since our goal is to identify all the quantum realisations consistent with particular outcome statistics, the projectiveness of the observables and the purity of the state are rigorously \emph{deduced} rather than \emph{assumed}.\footnote{In fact, we can only draw conclusions about the purity of the part of the state upon which the measurements acts non-trivially.}

It is well known that in the case of binary observables commutators and anticommutators of observables appear in the square of the Bell operator. The Bell operator $W$ corresponding to inequality~\eqref{eq:bell-inequality} is defined as
\begin{equation*}
W = \sum_{abxy} c_{abxy} P_{a}^{x} \otimes Q_{b}^{y},
\end{equation*}
where $\{P_{a}^{x}\}_{a}$ is Alice's measurement corresponding to setting $x$ and $\{Q_{b}^{y}\}_{b}$ is Bob's measurement corresponding to setting $y$. It is immediate that the Bell value $\beta$ can be computed as $\beta = \tr (W \rho_{AB})$. Recall that the Cauchy-Schwarz inequality for operators reads $ \abs{ \tr (X^{\dagger} Y) }^{2} \leq \tr (X^{\dagger} X) \cdot \tr (Y^{\dagger} Y)$. Setting $X^{\dagger} = W \sqrt{\rho_{AB}}$ and $Y = \sqrt{\rho_{AB}}$ immediately gives
\begin{equation*}
\beta^{2} = [ \tr (W \rho_{AB}) ]^{2} \leq \tr (W^{2} \rho_{AB}) \cdot \tr \rho_{AB} = \tr (W^{2} \rho_{AB}).
\end{equation*}
Therefore, proving an operator inequality
\begin{equation}
\label{eq:main-inequality}
W^{2} \leq g( A_{0}, A_{1} ) \otimes \mathbb{1},
\end{equation}
where $A_{0}$ and $A_{1}$ are the observables of Alice and $g$ is a function which outputs a Hermitian operator, immediately implies
\begin{equation*}
\beta \leq \sqrt{ \tr \big( g( A_{0}, A_{1} ) \rho_{A} \big) }.
\end{equation*}
If the right-hand side provides a useful characterisation of the observables of Alice, this constitutes a self-testing statement. Let us stress that for projective observables $W^{2}$ can often be written explicitly as a function of their commutators and anticommutators, which provides helpful intuition on the possible form of the function $g(A_{0}, A_{1})$. In the remainder of this paper we apply this idea to specific Bell scenarios.

\subsection{Certifying anticommuting observables}In the CHSH scenario Alice and Bob measure one of two binary observables denoted $A_{j}$ and $B_{k}$ for $j, k \in \{0, 1\}$. The CHSH operator is defined as
\biasedCHSH{}{}
for which $\beta^{L} = 2$ and $\beta^{Q} = 2 \sqrt{2}$. In Appendix~\ref{app:biased-chsh} we prove that
\begin{equation*}
W^{2} \leq 4 \cdot \mathbb{1} \otimes \mathbb{1} - [A_{0}, A_{1}] \otimes [B_{0}, B_{1}]
\end{equation*}
and by noting that $\abs{ [B_{0}, B_{1}] } \leq 2 \cdot \mathbb{1}$ we obtain
\begin{equation*}
W^{2} \leq 4 \cdot \mathbb{1} \otimes \mathbb{1} + 2 \, \abs{ [A_{0}, A_{1}] } \otimes \mathbb{1}.
\end{equation*}
From the argument outlined above we deduce that
\begin{equation}
\label{eq:chsh-commutator}
\beta \leq 2 \sqrt{ 1 + t },
\end{equation}
where $t := \frac{1}{2} \tr \big( \abs{ [A_{0}, A_{1}] } \rho_{A} \big) \in [0, 1]$ is the \emph{effective commutator}. This scalar quantity is invariant under local unitaries and adding extra degrees of freedom, it avoids making any statement about the observables outside the support of $\rho_{A}$ and is easily computable. The physical interpretation is clear: $t$ measures the incompatibility of Alice's observables ``weighted'' by the reduced state $\rho_{A}$. The matrix modulus, which arises in the derivation, avoids cancellations, e.g.~$t = 0$ implies that the observables commute on the support of $\rho_{A}$, which prevents us from observing any violation. At the other extreme, the maximal value $t = 1$ implies the existence of a unitary $U_{A}$ (see Proposition~\ref{prop:biased-chsh-optimal-observables} in Appendix~\ref{app:biased-chsh}) such that
\begin{align}
\label{eq:anticommuting-observables}
A_{0} = U_{A} ( \sigma_{x} \otimes \mathbb{1} ) U_{A}^{\dagger} \nbox{and} A_{1} = U_{A} ( \sigma_{y} \otimes \mathbb{1} ) U_{A}^{\dagger}
\end{align}
(recall the assumption that $\rho_{A}$ is full-rank). This shows that $t$ is a useful measure of how close Alice's observables are to a pair of anticommuting observables on a qubit. Inequality~\eqref{eq:chsh-commutator} is interesting for several reasons: it gives a non-trivial statement as soon as $\beta > 2$, it is tight (see Proposition~\ref{prop:biased-chsh-tight-trade-off} in Appendix~\ref{app:biased-chsh}) and observing the maximal violation $\beta = 2 \sqrt{2}$ implies $t = 1$, which allows us to deduce the exact form of the observables. Note that a similar approach has been previously used to analyse a semi-device-independent prepare-and-measure scenario where the transmitted system is a qubit~\cite{woodhead13a, woodhead14a}.

Although our primary goal is to certify observables, in the case of perfect statistics this argument leads to a complete rigidity statement. If $\beta = 2 \sqrt{2}$, then by symmetry the observables of Bob satisfy a relation analogous to eq.~\eqref{eq:anticommuting-observables} for some other unitary $U_{B}$ and, therefore, $W$ up to the unitary $U := U_{A} \otimes U_{B}$ is simply a two-qubit operator tensored with identities. The structure of this two-qubit operator is well known and, in particular, the largest eigenvalue is non-degenerate. The corresponding eigenstate is maximally entangled and we denote its density matrix $\Phi$. It is now clear that any state satisfying $\tr (W \rho_{AB}) = 2 \sqrt{2}$ must be of the form
\begin{equation*}
\rho_{AB} = U ( \Phi_{AB} \otimes \sigma_{A'B'} ) U^{\dagger},
\end{equation*}
where $\sigma_{A'B'}$ is an arbitrary state on the extra degrees of freedom. It is not hard to see that this rigidity argument is in spirit quite close to the original approach of Popescu and Rohrlich~\cite{popescu92a}.

\subsection{Certifying anticommuting observables in the multipartite case}This approach can be generalised to the multipartite setting. We consider an $n$-partite ($n \geq 2$) scenario in which every party has two binary observables and we denote the observables of the $k$-th party $A^{k}_{0}$ and $A^{k}_{1}$. The Bell operator $W_{n}$ for the Mermin-Ardehali-Belinskii-Klyshko (MABK) \cite{mermin90a, ardehali92a, belinskii93a} family is defined recursively:
\begin{equation}
\label{eq:mabk-definition}
W_{n} := \frac{1}{2} W_{n - 1} \otimes \big( A_{0}^{n} + A_{1}^{n} \big) + \frac{1}{2} W_{n - 1}' \otimes \big( A_{0}^{n} - A_{1}^{n} \big),
\end{equation}
where $W_{1} = A_{0}^{1}$ and the primed operators are obtained by swapping $A_{0}^{j} \leftrightarrow A_{1}^{j}$ for all $j$~\cite{belinskii93a}. It is easy to check that up to normalisation $W_{2}$ and $W_{3}$ correspond to the CHSH and Mermin operators, respectively. In Appendix~\ref{app:mabk} we show that
\begin{equation}
\label{eq:mabk-upper-bound}
W_{n}^{2} \leq \comsum{e},
\end{equation}
where $\cS_{n}^{e}$ is the set of $n$-bit strings of even parity, $x_{j}$ is the $j$-th bit of $x$ and we use the convention $X^{1} := X$, $X^{0} := \mathbb{1}$. This operator inequality captures the relation between the Bell violation and the local incompatibility of observables, e.g.~forcing all the observables to commute ($[A_{0}^{j}, A_{1}^{j}] = 0$ for all $j$) recovers the local-realistic bound $\beta_{n}^{L} = 1$ ($W_{n}^{2} \leq \mathbb{1}$), while the trivial upper bound $\abs{ [ A_{0}^{j}, A_{1}^{j} ] } \leq 2 \cdot \mathbb{1}$ correctly identifies the maximal quantum violation $\beta_{n}^{Q} = \sqrt{ 2^{n - 1} }$ ($W_{n}^{2} \leq 2^{n - 1} \cdot \mathbb{1}$). To certify the observables of the $k$-th party, we place the trivial upper bound on all the other commutators to obtain
\begin{equation*}
W_{n}^{2} \leq 2^{n - 2} \; \mathbb{1} \otimes \Big( \mathbb{1} + \frac{1}{2} \abs{ [ A_{0}^{k}, A_{1}^{k} } \Big).
\end{equation*}
This implies that
\tradeoffmabk{,}
where $t_{k}$ is the effective commutator of the $k$-th party. Just like in the CHSH case this trade-off is tight (see Proposition~\ref{prop:mabk-tight-trade-off} in Appendix~\ref{app:mabk}) and, in particular, the upper bound corresponding to $t_{k} = 0$ correctly identifies the largest violation achievable if one of the parties acts classically. In order to achieve the optimal violation $\beta_{n} = \beta_{n}^{Q}$, we require $t_{j} = 1$ for all $j$, which implies that the observables of each party take the form given by eq.~\eqref{eq:anticommuting-observables}. Through an argument analogous to the CHSH case we identify the subspace corresponding to the maximal violation and conclude that the state must be of the form
\begin{equation*}
\rho_{QR} = U_{QR} ( \Gamma_{Q_{1} \cdots Q_{n}} \otimes \sigma_{R_{1} \cdots R_{n}} ) U_{QR}^{\dagger},
\end{equation*}
where $Q_{k} R_{k}$ denotes the Hilbert space of the $k$-th party, $U_{QR} = \bigotimes_{j = 1}^{n} U_{Q_{j} R_{j}}^{j}$ is a product unitary, $\Gamma$ is a $n$-partite Greenberger-Horne-Zeilinger (GHZ)~\cite{greenberger89a} state and $\sigma_{R_{1} \cdots R_{n}}$ is an arbitrary state on the additional degrees of freedom. To the best of our knowledge this complete rigidity statement for the entire family of MABK inequalities constitutes a new result.\footnote{The case of $n = 3$ was fully analysed by Colbeck~\cite{colbeck06a} and later Miller and Shi~\cite{miller13a}, while robust state certification for $n = 3$ and $n = 4$ was shown numerically by P{\'a}l et al.~\cite{pal14a}.}

\subsection{Certifying non-maximally incompatible observables}In the previous cases the optimal observables on every party correspond to anticommuting observables on a qubit. Here, we show that an arbitrary pair of qubit observables, not necessarily maximally incompatible, is exactly characterised through their commutation relation. For $\alpha \geq 1$ we consider the generalisation of the CHSH inequality:
\biasedCHSH{\alpha}{}
introduced by Lawson, Linden and Popescu~\cite{lawson10a} and later used to investigate the relation between nonlocality and randomness~\cite{acin12a}. The local-realistic and quantum bounds for this inequality equal $\beta_{\alpha}^{L} = 2 \alpha$ and $\beta_{\alpha}^{Q} = 2 \sqrt{ \alpha^{2} + 1 }$ (hence $\beta_{\alpha}^{L} < \beta_{\alpha}^{Q}$ for all $\alpha \geq 1$) and the maximal violation is achieved by measuring a maximally entangled two-qubit state, but the optimal observables of Alice are no longer maximally incompatible. In Appendix~\ref{app:biased-chsh} we show that
\biasedsquaredinequality
Defining $t_{\alpha} := \frac{1}{4} \tr ( T_{\alpha} \rho_{A} ) - \frac{1}{2} (\alpha^{2} - 1)$, which recovers the effective commutator for $\alpha = 1$, allows us to write
\tradeoffchsh
If the observables of Alice commute, we have $t_{\alpha} \leq 0$, which immediately recovers the classical bound. On the other hand, observing the maximal violation $\beta_{\alpha} = \beta_{\alpha}^{Q}$ implies that $t_{\alpha} = 1$ and that there exists a unitary $U_{A}$ (see Proposition~\ref{prop:biased-chsh-optimal-observables} in Appendix~\ref{app:biased-chsh}) such that
\begin{gather*}
A_{0} = U_{A} ( \sigma_{x} \otimes \mathbb{1} ) U_{A}^{\dagger},\\
A_{1} = U_{A} \big( [ \cos \theta_{\alpha} \, \sigma_{x} + \sin \theta_{\alpha} \, \sigma_{y} ] \otimes \mathbb{1} \big) U_{A}^{\dagger}
\end{gather*}
for $\theta_{\alpha} := \arccos \big( \frac{ \alpha^{2} - 1 }{ \alpha^{2} + 1 } \big) \in (0, \pi/2]$. This characterises the exact commutation structure between the observables and by considering $\alpha \in [1, \infty)$ we can certify any angle between two projective observables on a qubit (angles larger than $\pi/2$ correspond to simply relabelling the outcomes of one of the observables). The maximal violation is only possible if the observables of Bob anticommute, which leads directly to a rigidity statement for the generalised CHSH inequality. It is worth pointing out that the statistics that maximally violate this inequality belong to the family of self-tests recently derived by Wang, Wu and Scarani~\cite{wang16a}.

\subsection{Certifying multiple anticommuting observables}
We conclude by showing that commutation-based measures are useful also in multiobservable scenarios. The simplest arrangement of three binary observables is arguably given by the three Pauli matrices:~$( \sigma_{x}, \sigma_{y}, \sigma_{z} )$. However, as observed by McKague and Mosca such an arrangement cannot be distinguished from $( \sigma_{x}, - \sigma_{y}, \sigma_{z} )$ in a device-independent fashion~\cite{mckague11a}. Up to this equivalence they showed that the relation among the three observables can be deduced from the pairwise relations between $(\sigma_{x}, \sigma_{y})$, $(\sigma_{x}, \sigma_{z})$ and $(\sigma_{y}, \sigma_{z})$. More specifically, if every pair satisfies the Mayers-Yao self-testing criterion, then there exists a unitary $U_{A}$ and a projective observable $\Upsilon$ (i.e.~$\Upsilon^{2} = \mathbb{1}$) such that
\begin{align}
\label{eq:three-observables}
A_{0} = U_{A} ( &\sigma_{x} \otimes \mathbb{1} ) U_{A}^{\dagger},\nonumber\\
A_{1} = U_{A} ( &\sigma_{y} \otimes \mathbb{1} ) U_{A}^{\dagger},\\
A_{2} = U_{A} ( &\sigma_{z} \otimes \Upsilon ) U_{A}^{\dagger}\nonumber.
\end{align}
What is slightly unsatisfactory about this statement, is the inequivalent treatment of the three observables: the additional observable $\Upsilon$ appears only in the equation for $A_{2}$.\footnote{It is clear that $\Upsilon$ can be placed in any of the three observables, but this will not lead to a symmetric statement.} On the other hand, we set out to certify three equivalent observables and the testing procedure treated all of them on equal footing. This aesthetic problem can be avoided if one uses a commutation-based approach. To self-test $n$ anticommuting observables we use an extension of the CHSH game introduced by Slofstra~\cite{slofstra11a} and observing the maximal violation implies that
\begin{equation}
\tr \big( \abs{ [A_{j}, A_{k}] } \rho_{A} \big) = 2
\end{equation}
for all $j \neq k$. In Appendix~\ref{app:three-observables} we show that for three observables, these pairwise conditions recover precisely the statement of McKague and Mosca, i.e.~they constitute a compact and symmetric formulation of relation~\eqref{eq:three-observables}. The argument can be easily extended to an arbitrary number of observables: for even $n$ we derive the exact form of the observables (which coincides with the standard construction for multiple anticommuting observables), while for odd $n$ we specify the form up to a single projective observable (which plays the same role as $\Upsilon$ for $n = 3$).

\section{Conclusions}In this paper we have proposed a novel commutation-based formulation to quantify relations between binary observables and derived nonlocality-incompatibility trade-offs by simple algebraic manipulations of the Bell operator. Our measures are mathematically convenient, easily computable and come with a natural measure of closeness and automatically deal with the case of rank-deficient reduced states. All the trade-offs derived in this paper are tight and the incompatibility measures are strict enough to recover the exact form of the optimal observables, but at the same time they yield non-trivial statements for arbitrary small violations. Since observing the maximal violation allows us to deduce precisely the form of the observables (on the support of the state), we immediately obtain complete rigidity statements. We have shown that any arrangement of two projective observables on a qubit is uniquely characterised by an easily computable commutation-based quantity and that these quantities are also useful in expressing relations among more than two observables.

In this work we have considered Bell inequalities which are maximally violated by projective measurements on qubits. A natural extension of this work would be to consider inequalities with binary outcomes which cannot be maximally violated by qubits, e.g.~the famous $I_{3322}$ inequality. Since this inequality is believed to be maximally violated only by infinite-dimensional systems~\cite{pal10a}, we conjecture that the maximal violation requires a commutation structure unachievable in any finite dimension. Clearly, this approach can be reversed and by finding commutation structures which cannot be realised in finite dimensions, one can hope to construct Bell inequalities whose maximal violation requires infinite dimensions. Another direction is to generalise this approach to measurements with more outcomes. For instance the Heisenberg-Weyl observables in dimension $d$ satisfy the ``twisted'' commutation relation $Z_{d} X_{d} = \omega X_{d} Z_{d}$, where $\omega = \exp ( 2 \pi i/d )$ and it is known that for $d = 3$ the corresponding measurements (which are easily seen to be mutually unbiased) achieve the maximal violation of a certain Bell inequality~\cite{buhrman05a, ji08a, liang09a}. Does the maximal violation certify precisely this commutation structure? What about similar inequalities tailored for higher-dimensional systems? Another choice of incompatible qudit measurements are the Collins-Gisin-Linden-Massar-Popescu (CGLMP) measurements and these violate maximally several inequalities~\cite{collins02a, devicente15a, salavrakos17a}. What is the commutation structure realised by the CGLMP measurements? Is it certified by the maximal violation of these inequalities?

\section{Acknowledgements}We would like to thank Yeong-Cherng Liang, Chris Perry and P{\'e}ter Vrana for constructive comments on an early version of this manuscript. We acknowledge fruitful discussions with Antonio Ac{\'i}n, Remik Augusiak, Nicolas Brunner, Matthias Christandl, Laura Man\v{c}inska, Alex M{\"u}ller-Hermes, Paul Skrzypczyk and Erik Woodhead. We are particularly grateful to Giacomo De Palma for help regarding the infinite-dimensional case. We acknowledge funding from the European Union's Horizon 2020 research and innovation programme under the Marie Sk{\l}odowska-Curie Action ROSETTA (grant no.~749316), the European Research Council (grant no.~337603), the Danish Council for Independent Research (Sapere Aude) and VILLUM FONDEN via the QMATH Centre of Excellence (grant no.~10059).
\appendix
\section{The biased CHSH inequality}
\label{app:biased-chsh}
In this Appendix we analyse the family of biased CHSH inequalities parametrised by $\alpha \in [1, \infty)$. The actual CHSH inequality corresponds to setting $\alpha = 1$.

Let $A_{j}$ and $B_{k}$ be binary observables, i.e.~Hermitian operators satisfying $- \mathbb{1} \leq A_{j}, B_{k} \leq \mathbb{1}$, and let us for now assume that they act on a finite dimensional Hilbert space. We define the Bell operator $W_{\alpha}$ as
\biasedCHSH{\alpha}{.}
\begin{prop}
\label{prop:biased-chsh-upper-bound}
The square of the Bell operator satisfies
\biasedsquaredinequality
\end{prop}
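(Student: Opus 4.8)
\emph{Proof plan.} The plan is to reduce to the case of projective $B_{0},B_{1}$, expand $W_{\alpha}^{2}$ explicitly, diagonalise the ``Bob side'', and finally verify a one-parameter family of operator inequalities on Alice's space using only the facts $A_{j}^{2}\leq\mathbb{1}$ and $\abs{N}\geq N$ for Hermitian $N$.

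First I would note that it is enough to prove the inequality when $B_{0}^{2}=B_{1}^{2}=\mathbb{1}$. For fixed $A_{0},A_{1}$ and a fixed vector $\ket{\psi}$ the quantity $\langle\psi|W_{\alpha}^{2}|\psi\rangle=\|W_{\alpha}\ket{\psi}\|^{2}$ is a convex function of $(B_{0},B_{1})$ because $W_{\alpha}$ is affine in these operators; the set of pairs of binary observables is convex and compact, and its extreme points are the pairs of projective observables, so the supremum of the left-hand side over $(B_{0},B_{1})$ is attained on projective observables, while the right-hand side does not depend on $B_{0},B_{1}$ at all.

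Next, writing $W_{\alpha}=A_{0}\otimes(\alpha B_{0}+B_{1})+A_{1}\otimes(\alpha B_{0}-B_{1})$ and using $B_{0}^{2}=B_{1}^{2}=\mathbb{1}$, a short direct computation gives
\begin{equation*}
W_{\alpha}^{2}=(\alpha^{2}+1)(A_{0}^{2}+A_{1}^{2})\otimes\mathbb{1}+(\alpha^{2}-1)\{A_{0},A_{1}\}\otimes\mathbb{1}+\alpha(A_{0}^{2}-A_{1}^{2})\otimes\{B_{0},B_{1}\}-\alpha[A_{0},A_{1}]\otimes[B_{0},B_{1}],
\end{equation*}
so that the proposition is equivalent to the positivity of
\begin{equation*}
(\alpha^{2}+1)(E_{0}+E_{1})\otimes\mathbb{1}+2\alpha\abs{[A_{0},A_{1}]}\otimes\mathbb{1}-\alpha(A_{0}^{2}-A_{1}^{2})\otimes\{B_{0},B_{1}\}+\alpha[A_{0},A_{1}]\otimes[B_{0},B_{1}],
\end{equation*}
where $E_{j}:=\mathbb{1}-A_{j}^{2}\geq0$. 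For projective $B_{0},B_{1}$ the operator $c:=B_{0}B_{1}$ is unitary, hence $\{B_{0},B_{1}\}=c+c^{\dagger}$ and $[B_{0},B_{1}]=c-c^{\dagger}$ commute and are simultaneously diagonalisable, acting as $2\cos\phi$ and $2i\sin\phi$ in an eigenbasis of $c$. Decomposing $\cH_{B}$ into these eigenspaces makes the above operator block-diagonal, and — setting $N:=-i[A_{0},A_{1}]$, which is Hermitian with $\abs{N}=\abs{[A_{0},A_{1}]}$ — its positivity reduces to: for every real $\theta$,
\begin{equation*}
(\alpha^{2}+1)(E_{0}+E_{1})+2\alpha\abs{N}-2\alpha\cos\theta\,(A_{0}^{2}-A_{1}^{2})-2\alpha\sin\theta\,N\geq0\nbox{on}\cH_{A}.
\end{equation*}

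Finally I would split this as $2\alpha(\abs{N}-\sin\theta\,N)\geq0$, which is immediate, together with $(\alpha^{2}+1)(E_{0}+E_{1})-2\alpha\cos\theta\,(A_{0}^{2}-A_{1}^{2})\geq0$; since $A_{0}^{2}-A_{1}^{2}=E_{1}-E_{0}$ and $\cos\theta\,(E_{1}-E_{0})$ is a convex combination of $\pm(E_{1}-E_{0})$, the latter follows from $(\alpha^{2}+1)(E_{0}+E_{1})\mp2\alpha(E_{1}-E_{0})=(\alpha\pm1)^{2}E_{0}+(\alpha\mp1)^{2}E_{1}\geq0$, valid because $E_{0},E_{1}\geq0$. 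The step I expect to be the main obstacle is arranging matters so that these elementary estimates suffice: one must both pass to projective $B_{0},B_{1}$ (so that $\{B_{0},B_{1}\}$ and $[B_{0},B_{1}]$ can be simultaneously diagonalised) and deliberately \emph{not} bound $A_{0}^{2}+A_{1}^{2}$ by $2\mathbb{1}$ too early — the retained slack $(\alpha^{2}+1)(E_{0}+E_{1})$ is exactly what is needed to absorb the non-projective term $A_{0}^{2}-A_{1}^{2}$.
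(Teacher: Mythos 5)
Your proof is correct, but it takes a genuinely different route from the paper's. The paper never reduces to projective $B_{0}, B_{1}$: after expanding $W_{\alpha}^{2}$ it observes that the coefficients of $B_{0}^{2}$ and $B_{1}^{2}$ are the positive operators $\alpha^{2}(A_{0}+A_{1})^{2}$ and $(A_{0}-A_{1})^{2}$, so $B_{k}^{2}\leq\mathbb{1}$ may be applied immediately; it then regroups so that $A_{0}^{2}$ and $A_{1}^{2}$ are tensored with the positive operators $(\alpha^{2}+1)\cdot\mathbb{1}\pm\alpha\{B_{0},B_{1}\}$, applies $A_{j}^{2}\leq\mathbb{1}$, and finishes with $-[A_{0},A_{1}]\otimes[B_{0},B_{1}]\leq\abs{[A_{0},A_{1}]}\otimes\abs{[B_{0},B_{1}]}\leq 2\,\abs{[A_{0},A_{1}]}\otimes\mathbb{1}$. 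You instead dispose of non-projective $B$'s at the outset by an extreme-point argument, simultaneously diagonalise $\{B_{0},B_{1}\}$ and $[B_{0},B_{1}]$ through the unitary $B_{0}B_{1}$, and reduce everything to a one-parameter family of inequalities on Alice's space, closed by a completion of squares in the slacks $E_{j}=\mathbb{1}-A_{j}^{2}$. The paper's version is shorter and, being a bare chain of operator inequalities, carries over directly to bounded observables on separable Hilbert spaces (a case the paper explicitly cares about), whereas your compactness/extreme-point reduction and the eigenbasis of $B_{0}B_{1}$ would need to be rephrased there (weak-$*$ compactness, spectral measures). In exchange, your route makes the structure more transparent: it isolates exactly which slack absorbs the non-projective term $A_{0}^{2}-A_{1}^{2}$ and where each inequality can be saturated, information that is hidden inside the paper's regrouping step and only exploited later in Propositions~\ref{prop:biased-chsh-optimal-observables} and~\ref{prop:biased-chsh-tight-trade-off}.
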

\begin{proof}
Multiplying the terms out gives
\begin{align*}
W_{\alpha}^{2} &= \alpha^{2} ( A_{0}^{2} + A_{1}^{2} + \{ A_{0}, A_{1} \} ) \otimes  B_{0}^{2}\\
&+ ( A_{0}^{2} + A_{1}^{2} - \{ A_{0}, A_{1} \} ) \otimes B_{1}^{2}\\
&+ \alpha ( A_{0}^{2} - A_{1}^{2} ) \otimes \{ B_{0}, B_{1} \} - \alpha [ A_{0}, A_{1} ] \otimes [ B_{0}, B_{1} ].
\end{align*}
Using the fact that $B_{k}^{2} \leq \mathbb{1}$ and regrouping terms we obtain
\begin{align*}
W_{\alpha}^{2} &\leq A_{0}^{2} \otimes \big[ ( \alpha^{2} + 1 ) \cdot \mathbb{1} + \alpha \{ B_{0}, B_{1} \} \big]\\
&+ A_{1}^{2} \otimes \big[ ( \alpha^{2} + 1 ) \cdot \mathbb{1} - \alpha \{ B_{0}, B_{1} \} \big] \\
&+ ( \alpha^{2} - 1 ) \{ A_{0}, A_{1} \} \otimes \mathbb{1} - \alpha [ A_{0}, A_{1} ] \otimes [ B_{0}, B_{1} ].
\end{align*}
Since $- 2 \cdot \mathbb{1} \leq \{ B_{0}, B_{1} \} \leq 2 \cdot \mathbb{1}$, we have
\begin{equation*}
( \alpha^{2} + 1 ) \cdot \mathbb{1} \pm \alpha \{ B_{0}, B_{1} \} \geq 0,
\end{equation*}
so we can use $A_{j}^{2} \leq \mathbb{1}$ to obtain
\begin{align*}
W_{\alpha}^{2} \leq 2 ( \alpha^{2} + 1 ) \cdot \mathbb{1} \otimes \mathbb{1} &+ (\alpha^{2} - 1) \cdot \{ A_{0}, A_{1} \} \otimes \mathbb{1}\\
&- \alpha [ A_{0}, A_{1} ] \otimes [ B_{0}, B_{1} ].
\end{align*}
Upperbounding the commutators by their matrix moduli leads to
\begin{align*}
W_{\alpha}^{2} \leq 2 ( \alpha^{2} + 1 ) \cdot \mathbb{1} \otimes \mathbb{1} &+ (\alpha^{2} - 1) \cdot \{ A_{0}, A_{1} \} \otimes \mathbb{1}\\
&+ \alpha \abs{ [ A_{0}, A_{1} ] } \otimes \abs{ [ B_{0}, B_{1} ] }.
\end{align*}
Using $\abs{ [B_{0}, B_{1} ] } \leq 2 \cdot \mathbb{1}$ completes the proof.
\end{proof}
In the next proposition we use the fact that for binary observables $A_{0}, A_{1}$ we have
\begin{equation}
\label{eq:com-anticom}
\abs{ \{A_{0}, A_{1}\} }^{2} + \abs{ [A_{0}, A_{1}] }^{2} \leq 4 \cdot \mathbb{1},
\end{equation}
where $\abs{X} := \sqrt{ X^{\dagger} X }$. The inequality holds as an equality iff both observables are projective (see Section 2.2 of Ref.~\cite{kaniewski16a} for an elementary proof).
\begin{prop}
\label{prop:Talpha-bounds}
The operator $T_{\alpha}$ satisfies
\begin{equation*}
T_{\alpha} \leq 2 ( \alpha^{2} + 1 ) \cdot \mathbb{1}.
\end{equation*}
\end{prop}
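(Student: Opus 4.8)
The plan is to combine the elementary identity~\eqref{eq:com-anticom} with operator monotonicity of the square root and a two-dimensional Cauchy--Schwarz estimate. Write $S := \{A_{0}, A_{1}\}$ and $C := \abs{[A_{0}, A_{1}]}$, so that $S$ is Hermitian, $C \geq 0$, and $T_{\alpha} = (\alpha^{2} - 1) S + 2 \alpha C$ with $\alpha^{2} - 1 \geq 0$. First I would use the trivial bound $S \leq \abs{S}$, valid for any Hermitian operator, together with $\abs{S}^{2} = S^{2}$ and inequality~\eqref{eq:com-anticom} to get $\abs{S}^{2} = S^{2} \leq 4 \cdot \mathbb{1} - C^{2}$; in particular $0 \leq 4 \cdot \mathbb{1} - C^{2}$, so $0 \leq C \leq 2 \cdot \mathbb{1}$.

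Next, since $t \mapsto \sqrt{t}$ is operator monotone on the positive cone, $\abs{S}^{2} \leq 4 \cdot \mathbb{1} - C^{2}$ implies $\abs{S} \leq \sqrt{ 4 \cdot \mathbb{1} - C^{2} }$ (the fact that $S^{2}$ and $C^{2}$ need not commute is no obstacle, as operator monotonicity requires no commutativity assumption). Multiplying by $\alpha^{2} - 1 \geq 0$ and adding $2 \alpha C$ gives
\begin{equation*}
T_{\alpha} \leq (\alpha^{2} - 1) \sqrt{ 4 \cdot \mathbb{1} - C^{2} } + 2 \alpha C.
\end{equation*}

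The two operators on the right-hand side are both functions of the single Hermitian operator $C$, hence commute and are simultaneously diagonalisable, so it suffices to verify the scalar inequality
\begin{equation*}
(\alpha^{2} - 1) \sqrt{ 4 - c^{2} } + 2 \alpha c \leq 2 ( \alpha^{2} + 1 )
\end{equation*}
for every $c \in \spec(C) \subseteq [0, 2]$. This is the Cauchy--Schwarz inequality for the vectors $( \alpha^{2} - 1, 2 \alpha )$ and $( \sqrt{ 4 - c^{2} }, c )$, since $( \alpha^{2} - 1 )^{2} + ( 2 \alpha )^{2} = ( \alpha^{2} + 1 )^{2}$ and $( 4 - c^{2} ) + c^{2} = 4$. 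Combining the two displays yields $T_{\alpha} \leq 2 ( \alpha^{2} + 1 ) \cdot \mathbb{1}$.

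The only genuinely delicate point is the passage from the operator inequality $S^{2} \leq 4 \cdot \mathbb{1} - C^{2}$ to $\abs{S} \leq \sqrt{ 4 \cdot \mathbb{1} - C^{2} }$: one has to invoke operator monotonicity of the square root rather than trying to manipulate $S$ and $C$ directly, as they do not commute in general. Everything else is a routine reduction to a scalar estimate, and the fact that the Cauchy--Schwarz bound is attained for a suitable $c$ already signals that the corresponding trade-off will turn out to be tight.
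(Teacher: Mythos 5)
Your proof is correct and follows essentially the same route as the paper: bound $\{A_{0},A_{1}\}$ by its modulus, use inequality~\eqref{eq:com-anticom} together with operator monotonicity of the square root, and reduce to a scalar bound on the spectrum of $\abs{[A_{0},A_{1}]}$. The only cosmetic difference is that you verify the scalar inequality by Cauchy--Schwarz for the vectors $(\alpha^{2}-1, 2\alpha)$ and $(\sqrt{4-c^{2}}, c)$, whereas the paper maximises the function $g(\lambda) = (\alpha^{2}-1)\sqrt{4-\lambda^{2}} + 2\alpha\lambda$ directly; both immediately give the same tightness condition.
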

\begin{proof}
We first upperbound the anticommutator by its matrix modulus
\begin{equation*}
T_{\alpha} \leq (\alpha^{2} - 1) \abs{ \{ A_{0}, A_{1} \} } + 2 \alpha \abs{ [ A_{0}, A_{1} ] }.
\end{equation*}
Rearranging inequality~\eqref{eq:com-anticom} gives
\begin{equation*}
\abs{ \{A_{0}, A_{1}\} }^{2} \leq 4 \cdot \mathbb{1} - \abs{ [A_{0}, A_{1}] }^{2}.
\end{equation*}
Since the function $f(t) = \sqrt{t}$ is operator monotone, we have
\begin{equation*}
\abs{ \{A_{0}, A_{1}\} } \leq \sqrt{ 4 \cdot \mathbb{1} - \abs{ [A_{0}, A_{1}] }^{2} },
\end{equation*}
which immediately implies that
\begin{equation*}
T_{\alpha} \leq (\alpha^{2} - 1) \sqrt{ 4 \cdot \mathbb{1} - \abs{ [A_{0}, A_{1}] }^{2} } + 2 \alpha \abs{ [ A_{0}, A_{1} ] }.
\end{equation*}
It is clear that the spectrum of the right-hand side is the image of the spectrum of $\abs{[A_{0}, A_{1}]}$ under the function
\begin{equation*}
g(\lambda) := (\alpha^{2} - 1) \sqrt{ 4 - \lambda^{2} } + 2 \alpha \lambda.
\end{equation*}
It is easy to check that the maximum value of $g$ in the interval $[0, 2]$ equals $2 (\alpha^{2} + 1)$ and occurs only for $\lambda = 4 \alpha/(\alpha^{2} + 1)$.
\end{proof}
Our goal now is to show that if we observe the maximal violation, then we can exactly characterise the observables. One of the technical tools used in the proofs is the H{\"o}lder inequality, which states that for arbitrary operators $X, Y$ we have
\begin{equation*}
\abs{ \tr ( X Y ) } \leq \norm{X}_{\infty} \cdot \norm{Y}_{1},
\end{equation*}
where $\norm{\cdot}_{p}$ denotes the Schatten $p$-norm. In our case the operators $X$ and $Y$ are Hermitian, which allows us to drop the absolute value on the left-hand side and, moreover, $Y$ is positive semidefinite, which allows us to replace the $1$-norm with the trace, i.e.~we obtain
\begin{equation*}
\tr ( X Y ) \leq \norm{X}_{\infty} \cdot \tr Y.
\end{equation*}
It is easy to see that if the inequality holds as an equality and $Y$ is full-rank, then
\begin{equation*}
X = \norm{X}_{\infty} \cdot \mathbb{1}.
\end{equation*}
\begin{prop}
\label{prop:biased-chsh-optimal-observables}
Suppose that $\tr( T_{\alpha} \rho_{A} ) = 2 ( \alpha^{2} + 1 )$ and $\rho_{A}$ is full-rank. Then, there exists a unitary $U_{A}$ such that
\begin{align*}
A_{0} &= U_{A} ( \sigma_{x} \otimes \mathbb{1} ) U_{A}^{\dagger},\\
A_{1} &= U_{A} ( \cos \theta_{\alpha} \, \sigma_{x} + \sin \theta_{\alpha} \, \sigma_{y} \otimes \mathbb{1} ) U_{A}^{\dagger}
\end{align*}
for $\theta_{\alpha} := \arccos \big( \frac{ \alpha^{2} - 1 }{ \alpha^{2} + 1 } \big)$.
\end{prop}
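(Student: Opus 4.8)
The plan is to promote the scalar hypothesis to an operator identity, read off from it that $A_0,A_1$ are projective with a scalar anticommutator, and then classify such pairs by a short argument about products of reflections.

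\emph{Step 1 (scalar $\to$ operator).} Proposition~\ref{prop:Talpha-bounds} gives $T_\alpha\leq 2(\alpha^{2}+1)\mathbb{1}$, so $T_\alpha$ has operator norm at most $2(\alpha^{2}+1)$. Since $\rho_A$ is full-rank, the Hölder equality condition recalled just before the proposition turns $\tr(T_\alpha\rho_A)=2(\alpha^{2}+1)$ into the operator identity $T_\alpha=2(\alpha^{2}+1)\mathbb{1}$.

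\emph{Step 2 (equality cases).} I would revisit the proof of Proposition~\ref{prop:Talpha-bounds} as the chain $2(\alpha^{2}+1)\mathbb{1}=T_\alpha\leq(\alpha^{2}-1)\abs{\{A_0,A_1\}}+2\alpha\abs{[A_0,A_1]}\leq(\alpha^{2}-1)\sqrt{4\mathbb{1}-\abs{[A_0,A_1]}^{2}}+2\alpha\abs{[A_0,A_1]}=:g(\abs{[A_0,A_1]})\leq 2(\alpha^{2}+1)\mathbb{1}$, all of whose inequalities are now forced to be equalities. Because $\abs{[A_0,A_1]}$ is positive semidefinite with operator norm at most $2$ and $g$ attains its maximum on $[0,2]$ only at $\lambda=4\alpha/(\alpha^{2}+1)$ (strict concavity of $g$ when $\alpha>1$, linearity when $\alpha=1$), the outer equality forces $\abs{[A_0,A_1]}=\tfrac{4\alpha}{\alpha^{2}+1}\mathbb{1}=2\sin\theta_{\alpha}\cdot\mathbb{1}$. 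For $\alpha>1$, equality in the first step gives $\{A_0,A_1\}=\abs{\{A_0,A_1\}}$ and equality in the second gives $\abs{\{A_0,A_1\}}^{2}+\abs{[A_0,A_1]}^{2}=4\mathbb{1}$; by~\eqref{eq:com-anticom} the latter forces $A_0^{2}=A_1^{2}=\mathbb{1}$, and then $\{A_0,A_1\}=\sqrt{4\mathbb{1}-\abs{[A_0,A_1]}^{2}}=\tfrac{2(\alpha^{2}-1)}{\alpha^{2}+1}\mathbb{1}=2\cos\theta_{\alpha}\cdot\mathbb{1}$. When $\alpha=1$ the anticommutator terms are vacuous, but $\abs{[A_0,A_1]}=2\mathbb{1}$ already saturates~\eqref{eq:com-anticom}, again yielding $A_0^{2}=A_1^{2}=\mathbb{1}$ and $\{A_0,A_1\}=0=2\cos\theta_{1}\cdot\mathbb{1}$. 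So in all cases $A_0^{2}=A_1^{2}=\mathbb{1}$ and $\{A_0,A_1\}=2\cos\theta_{\alpha}\cdot\mathbb{1}$ with $\theta_{\alpha}=\arccos\tfrac{\alpha^{2}-1}{\alpha^{2}+1}\in(0,\pi/2]$.

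\emph{Step 3 (pairs of reflections).} It remains to show that any two Hermitian involutions $A_0,A_1$ on a finite-dimensional space with $\{A_0,A_1\}=2\cos\theta_{\alpha}\mathbb{1}$ take the stated form. I would set $U:=A_0A_1$, which is unitary with $U^{\dagger}=A_1A_0=U^{-1}$ and $U+U^{\dagger}=\{A_0,A_1\}=2\cos\theta_{\alpha}\mathbb{1}$, hence $U^{2}-2\cos\theta_{\alpha}U+\mathbb{1}=0$ and $\spec(U)\subseteq\{e^{i\theta_{\alpha}},e^{-i\theta_{\alpha}}\}$. Since $A_0UA_0=A_1A_0=U^{-1}$, conjugation by $A_0$ swaps the two eigenspaces of $U$ (distinct because $\theta_{\alpha}\neq 0$), so they share a dimension $m$ and the whole space has dimension $2m$. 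Taking an orthonormal basis $\ket{e_1},\dots,\ket{e_m}$ of the $e^{i\theta_{\alpha}}$-eigenspace and setting $\ket{f_j}:=A_0\ket{e_j}$ (an orthonormal basis of the $e^{-i\theta_{\alpha}}$-eigenspace), one checks directly that in the basis $(\ket{e_1},\ket{f_1},\dots,\ket{e_m},\ket{f_m})$ the operators $A_0$ and $A_1=A_0U$ are block diagonal with all blocks equal to $\sigma_x$, respectively $\cos\theta_{\alpha}\sigma_x+\sin\theta_{\alpha}\sigma_y$; reordering the basis exhibits exactly $A_0=U_A(\sigma_x\otimes\mathbb{1})U_A^{\dagger}$ and $A_1=U_A((\cos\theta_{\alpha}\sigma_x+\sin\theta_{\alpha}\sigma_y)\otimes\mathbb{1})U_A^{\dagger}$ for the unitary $U_A$ implementing the change of basis.

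\emph{Main obstacle.} The only genuinely structural part is Step~3, and the substitution $U=A_0A_1$ is what keeps it short: handling the pair $(A_0,A_1)$ via the general two-reflections decomposition would force separate arguments that no one-dimensional invariant block survives (excluded here since $\abs{\cos\theta_{\alpha}}<1$) and that the relative phase is the \emph{same} in every two-dimensional block, both of which come for free once the basis is built from eigenvectors of $U$. Steps~1--2 are just careful tracking of the equality conditions in Proposition~\ref{prop:Talpha-bounds} and~\eqref{eq:com-anticom}, with the mild caveat that $\alpha=1$ must be treated separately because the $(\alpha^{2}-1)$ prefactor vanishes there.
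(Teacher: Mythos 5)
Your proposal is correct. Steps 1--2 (promoting the trace condition to the operator identity $T_{\alpha} = 2(\alpha^{2}+1)\mathbb{1}$ via H{\"o}lder, then tracking the equality cases of Proposition~\ref{prop:Talpha-bounds} and of inequality~\eqref{eq:com-anticom} to conclude $A_{0}^{2}=A_{1}^{2}=\mathbb{1}$, $\{A_{0},A_{1}\}=2\cos\theta_{\alpha}\cdot\mathbb{1}$ and $\abs{[A_{0},A_{1}]}=2\sin\theta_{\alpha}\cdot\mathbb{1}$) coincide with the paper's argument, and your separate treatment of $\alpha=1$ is a sensible precaution. Step~3, however, takes a genuinely different route. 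The paper starts from the spectral projectors $P_{\pm}$ of the anti-Hermitian commutator, builds the qubit factor from the pair $(P_{+}, A_{0}P_{+}A_{0})$, expands $A_{0}$ and $A_{1}$ in the Pauli basis with Hermitian coefficient operators $K_{x}, K_{y}$, jointly diagonalises them (this is where Jordan's lemma surfaces, with block angles $\gamma_{j}$), and then applies a controlled rotation $U_{1}$ to align all blocks before imposing consistency with the commutator and anticommutator. You instead diagonalise the unitary $U=A_{0}A_{1}$, whose spectrum is pinned to $\{e^{\pm i\theta_{\alpha}}\}$ by the scalar anticommutator, and read off both observables in the basis $\{\ket{e_{j}}, A_{0}\ket{e_{j}}\}$; the uniformity of the angle across blocks and the needed commutator value $\abs{[A_{0},A_{1}]} = \abs{U - U^{\dagger}} = 2\sin\theta_{\alpha}\cdot\mathbb{1}$ then come for free, so no alignment unitary is required. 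Your version is shorter and more self-contained for this proposition; the paper's $K$-operator machinery is heavier but is reused verbatim for the three-observable result in Appendix~\ref{app:three-observables} and adapts directly to the infinite-dimensional discussion via joint spectral measures (your argument would also survive there, via the spectral theorem for the unitary $U$, but you do not address this). No gaps.
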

\begin{proof}
Since the H{\"o}lder inequality is tight and the reduced state is full-rank we immediately see that
\begin{equation*}
T_{\alpha} = 2 ( \alpha^{2} + 1 ) \cdot \mathbb{1}.
\end{equation*}
This implies that all the steps used to prove the upper bound in Proposition~\ref{prop:Talpha-bounds} are tight. In particular, the observables must be projective $A_{j}^{2} = \mathbb{1}$ and the anticommutator is a positive semidefinite operator $\{ A_{0}, A_{1} \} = \abs{ \{ A_{0}, A_{1} \} }$. Since
\begin{equation*}
\abs{ [A_{0}, A_{1} ] } = \frac{ 4 \alpha }{ \alpha^{2} + 1 } \cdot \mathbb{1}
\end{equation*}
and $[A_{0}, A_{1}]$ is an anti-Hermitian operator, the spectral decomposition reads
\begin{equation*}
[A_{0}, A_{1}] = \frac{4 \alpha  \cdot i}{\alpha^{2} + 1} ( P_{+} - P_{-} )
\end{equation*}
for some orthogonal projectors $P_{+}$ and $P_{-}$ such that $P_{+} + P_{-} = \mathbb{1}$. Since
\begin{equation*}
A_{0} [A_{0}, A_{1}] A_{0} = - [A_{0}, A_{1}],
\end{equation*}
we conclude that $A_{0} P_{\pm} A_{0} = P_{\mp}$. If $\{ \ket{e_{j}^{0}} \}_{j}$ is an orthonormal basis for the support of $P_{+}$, then $\{ \ket{e_{j}^{1}} \}_{j}$ for $\ket{e_{j}^{1}} := A_{0} \ket{e_{j}^{0}}$ is an orthonormal basis for the support of $P_{-}$. Define the unitary $U_{0}$ as
\begin{equation*}
U_{0} \ket{e_{j}^{b}} = \ket{b} \ket{j}
\end{equation*}
for $b \in \{0, 1\}$. It is easy to verify that 
\begin{equation*}
U_{0} [A_{0}, A_{1} ] U_{0}^{\dagger} = \frac{ 4 \alpha \cdot i }{\alpha^{2} + 1} \, \sigma_{z} \otimes \mathbb{1}.
\end{equation*}
Since $\{ \mathbb{1}, \sigma_{x}, \sigma_{y}, \sigma_{z} \}$ constitute an operator basis for linear operators acting on $\amsbb{C}^{2}$, we can without loss of generality write
\begin{equation}
\label{eq:k0-kx-ky-kz}
U_{0} A_{0} U_{0}^{\dagger} = \mathbb{1} \otimes K_{0} + \sigma_{x} \otimes K_{x} + \sigma_{y} \otimes K_{y} + \sigma_{z} \otimes K_{z}
\end{equation}
for some Hermitian operators $K_{0}, K_{x}, K_{y}, K_{z}$. It is easy to check that for projective observables
\begin{equation*}
\{ A_{0}, [ A_{0}, A_{1} ] \} = 0,
\end{equation*}
which implies that $K_{0} = K_{z} = 0$. Requiring that $A_{0}^{2} = \mathbb{1}$ implies that
\begin{equation}
\label{eq:kx-ky-conditions}
K_{x}^{2} + K_{y}^{2} = \mathbb{1} \nbox{and} [K_{x}, K_{y}] = 0.
\end{equation}
Therefore, $K_{x}$ and $K_{y}$ can be simultaneously diagonalised and the eigenvalues can be written in terms of trigonometric functions, i.e.
\begin{equation}
\label{eq:kx-ky}
K_{x} = \sum_{j} \cos \gamma_{j} \, \ketbraq{j} \nbox{and} K_{y} = \sum_{j} \sin \gamma_{j} \, \ketbraq{j}
\end{equation}
for some angles $\gamma_{j}$ and some orthonormal basis $\{ \ket{j} \}$. This allows us to write
\begin{align*}
U_{0} A_{0} U_{0}^{\dagger} &= \sigma_{x} \otimes K_{x} + \sigma_{y} \otimes K_{y}\\
&= \sum_{j} ( \cos \gamma_{j} \, \sigma_{x} + \sin \gamma_{j} \, \sigma_{y} ) \otimes \ketbraq{j}.
\end{align*}
Now we align all the qubit observables using the following controlled unitary
\begin{equation}
\label{eq:u1}
U_{1} = \sum_{j} \exp \Big( \frac{i \gamma_{j}}{2} \cdot \sigma_{z} \Big) \otimes \ketbraq{j}.
\end{equation}
It is easy to check that
\begin{align}
\label{eq:u1-u0-a0}
U_{1} U_{0} A_{0} U_{0}^{\dagger} U_{1}^{\dagger} &= \sigma_{x} \otimes \mathbb{1},\\
U_{1} U_{0} [A_{0}, A_{1} ] U_{0}^{\dagger} U_{1}^{\dagger} &= \frac{ 4 \alpha \cdot i }{\alpha^{2} + 1} \sigma_{z} \otimes \mathbb{1}\nonumber.
\end{align}
An analogous reasoning applied to $A_{1}$ leads to the conclusion that we can without loss of generality write
\begin{equation*}
U_{1} U_{0} A_{1} U_{0}^{\dagger} U_{1}^{\dagger} = \sigma_{x} \otimes K_{x}' + \sigma_{y} \otimes K_{y}'.
\end{equation*}
Since the anticommutator is positive semidefinite and the observables are projective, we know that
\begin{align*}
\{ A_{0}, A_{1} \} = \abs{ \{ A_{0}, A_{1} \} } &= \sqrt{ 4 \cdot \mathbb{1} - \abs{ [A_{0}, A_{1}] }^{2} }\\
&= \frac{ 2 (\alpha^{2} - 1) }{\alpha^{2} + 1} \cdot \mathbb{1}.
\end{align*}
Define $\theta_{\alpha} := \arccos \big( \frac{ \alpha^{2} - 1 }{ \alpha^{2} + 1 } \big) \in [0, \pi/2]$. Imposing consistency on the anticommutator gives
\begin{equation*}
K_{x}' = \cos \theta_{\alpha} \cdot \mathbb{1}.
\end{equation*}
On the other hand, requiring consistency on the commutator implies that
\begin{equation*}
K_{y}' = \sin \theta_{\alpha} \cdot \mathbb{1}.
\end{equation*}
Clearly, setting $U_{A} = U_{0}^{\dagger} U_{1}^{\dagger}$ concludes the proof.
\end{proof}
It is worth pointing out that while in the proof we have (purposefully) avoided using Jordan's lemma, one can clearly see its emergence in eq.~\eqref{eq:kx-ky}. Since $K_{x}$ and $K_{y}$ commute, we can find a common eigenbasis and these eigenstates can be seen as labels for the distinct two-dimensional subspaces.

Let us also mention that all these results easily generalise to the case of infinite-dimensional systems. Suppose that the subsystems of Alice and Bob are described by separable Hilbert spaces. We first observe that the Hermitian operators $K_{0}, K_{x}, K_{y}, K_{z}$ appearing in eq.~\eqref{eq:k0-kx-ky-kz} must be bounded, because $A_{0}$ is bounded. The existence of a basis in which $K_{x}$ and $K_{y}$ are simultaneously diagonal used in eq.~\eqref{eq:kx-ky} is replaced by the existence of a joint spectral resolution. More specifically, according to Theorem 1 in Section 6.5 of Ref.~\cite{birman87a} the commutation condition $[K_{x}, K_{y}] = 0$ implies the existence of a spectral measure $E$ on $\amsbb{R}^{2}$ such that
\begin{equation*}
K_{x} = \int_{\amsbb{R}^{2}} u \; d E(u, v) \nbox{and} K_{y} = \int_{\amsbb{R}^{2}} v \; d E(u, v),
\end{equation*}
which immediately leads to
\begin{align*}
U_{0} A_{0} U_{0}^{\dagger} &= \sigma_{x} \otimes K_{x} + \sigma_{y} \otimes K_{y}\\
&= \int_{\amsbb{R}^{2}} ( u\, \sigma_{x} + v \, \sigma_{y} ) \otimes d E(u, v).
\end{align*}
The condition $K_{x}^{2} + K_{y}^{2} = \mathbb{1}$ implies that the measure $E$ is supported only on the circle $u^{2} + v^{2} = 1$, so we can find a measurable function $\gamma : \amsbb{R}^{2} \to [0, 2 \pi]$ such that for every point $(u, v)$ in the support of $E$ we have $u = \cos \gamma(u,v)$ and $v = \sin \gamma(u,v)$. The unitary $U_{1}$ defined in eq.~\eqref{eq:u1} is replaced with
\begin{equation}
U_{1} = \int_{\amsbb{R}^{2}} \exp \bigg( \frac{i \gamma(u, v)}{2} \cdot \sigma_{z} \bigg) \otimes d E(u, v),
\end{equation}
which immediately recovers eq.~\eqref{eq:u1-u0-a0}:
\begin{equation*}
U_{1} U_{0} A_{0} U_{0}^{\dagger} U_{1}^{\dagger} = \sigma_{x} \otimes \int_{\amsbb{R}^{2}} d E(u, v) = \sigma_{x} \otimes \mathbb{1}.
\end{equation*}
The rest of the proof remains unchanged.
\begin{prop}
\label{prop:biased-chsh-tight-trade-off}
The trade-off between the Bell violation $\beta_{\alpha}$ and the measure of incompatibility on Alice's side $t_{\alpha}$
\tradeoffchsh
is tight for $t_{\alpha} \in [0, 1]$.
\end{prop}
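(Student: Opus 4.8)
The plan is to exhibit, for every $t \in [0,1]$, an explicit two-qubit realisation with $t_{\alpha} = t$ that saturates $\beta_{\alpha} = 2\sqrt{\alpha^{2} + t}$. I would take Alice's observables to be $A_{0} = \sigma_{x}$ and $A_{1} = \cos\mu\,\sigma_{x} + \sin\mu\,\sigma_{y}$, with a parameter $\mu \in [0, \theta_{\alpha}]$ where $\theta_{\alpha} = \arccos\!\big(\frac{\alpha^{2}-1}{\alpha^{2}+1}\big)$ as before, and the shared state to be the maximally entangled state $\Phi$ on two qubits, so that $\rho_{A} = \mathbb{1}/2$ is full-rank.

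First I would evaluate the commutation data. Since $\{\sigma_{x}, \sigma_{y}\} = 0$ and $[\sigma_{x}, \sigma_{y}] = 2i\sigma_{z}$, one gets $\{A_{0}, A_{1}\} = 2\cos\mu\cdot\mathbb{1}$ and $\abs{[A_{0}, A_{1}]} = 2\sin\mu\cdot\mathbb{1}$ for $\mu \in [0,\pi]$, hence $T_{\alpha} = 2\big[(\alpha^{2}-1)\cos\mu + 2\alpha\sin\mu\big]\,\mathbb{1}$ and
\[
t_{\alpha} = \tfrac{1}{2}(\alpha^{2}-1)(\cos\mu - 1) + \alpha\sin\mu .
\]
A one-line derivative check shows that $t_{\alpha}(\mu)$ is strictly increasing on $[0,\theta_{\alpha}]$: its stationary point sits precisely at $\tan\mu = \frac{2\alpha}{\alpha^{2}-1} = \tan\theta_{\alpha}$, while $t_{\alpha}(0) = 0$ and $t_{\alpha}(\theta_{\alpha}) = 1$. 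Hence, as $\mu$ runs over $[0, \theta_{\alpha}]$, $t_{\alpha}$ takes every value in $[0,1]$.

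Next I would choose Bob's observables to point along the directions naturally dual to $\alpha(A_{0} + A_{1})$ and $A_{0} - A_{1}$ — both operators lie in the $xy$-plane, with Bloch-vector lengths $\alpha\sqrt{2 + 2\cos\mu}$ and $\sqrt{2 - 2\cos\mu}$ respectively — the $\sigma_{y}$-reflection being dictated by the transpose that appears in $\langle\Phi| M \otimes N |\Phi\rangle$. The standard evaluation of $\beta_{\alpha} = \tr(W_{\alpha}\,\Phi)$ then gives
\[
\beta_{\alpha} = \alpha\sqrt{2 + 2\cos\mu} + \sqrt{2 - 2\cos\mu},
\]
and squaring, using $\sqrt{(2+2\cos\mu)(2-2\cos\mu)} = 2\sin\mu$, yields $\beta_{\alpha}^{2} = 2\alpha^{2} + 2 + 2(\alpha^{2}-1)\cos\mu + 4\alpha\sin\mu = 4(\alpha^{2} + t_{\alpha})$, i.e.~$\beta_{\alpha} = 2\sqrt{\alpha^{2} + t_{\alpha}}$. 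So the upper bound (obtained from Proposition~\ref{prop:biased-chsh-upper-bound} via the Cauchy--Schwarz argument of the main text) is attained for every $t_{\alpha} \in [0,1]$.

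There is no deep obstacle here: the qubit family above is essentially forced, and tightness reduces to the two elementary identities just sketched. The only points requiring care are bookkeeping ones — keeping the sign/transpose conventions straight in $\langle\Phi| M \otimes N |\Phi\rangle$ when computing $\beta_{\alpha}$, and verifying the monotonicity of $t_{\alpha}(\mu)$ on $[0,\theta_{\alpha}]$ so that the whole interval $[0,1]$ is covered rather than just its endpoints.
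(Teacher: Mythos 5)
Your construction is correct and saturates the bound for every $t_{\alpha}\in[0,1]$; the computations of $\{A_{0},A_{1}\}$, $\abs{[A_{0},A_{1}]}$, $t_{\alpha}(\mu)$ and the identity $\beta_{\alpha}^{2}=4(\alpha^{2}+t_{\alpha})$ all check out. Your route differs from the paper's in how the Bell value is realised. The paper fixes Bob's observables to be $B_{0}=\sigma_{x}$, $B_{1}=\sigma_{y}$, computes $W_{\alpha}^{2}$ explicitly as $2[(\alpha^{2}+1)+(\alpha^{2}-1)\cos\gamma]\,\mathbb{1}\otimes\mathbb{1}+4\alpha\sin\gamma\,\sigma_{z}\otimes\sigma_{z}$, and then uses the anticommutation symmetry $(\sigma_{z}\otimes\mathbb{1})W_{\alpha}(\sigma_{z}\otimes\mathbb{1})=-W_{\alpha}$ to argue that $W_{\alpha}$ has an eigenvalue $+\sqrt{\lambda_{\gamma}}$, taking the corresponding eigenstate as the shared state. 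You instead fix the state to be maximally entangled and optimise Bob's observables, evaluating $\beta_{\alpha}$ directly via $\bramatket{\Phi}{M\otimes N}{\Phi}=\tfrac{1}{2}\tr(MN\tran)$, which replaces the spectral argument by an elementary Bloch-vector computation (the transpose-induced $\sigma_{y}$ reflection being the only point of care, as you note). Both arguments are equally rigorous; yours has the small advantage that the shared state is explicit from the outset, and your restriction to $\mu\in[0,\theta_{\alpha}]$ with the monotonicity check makes the coverage of the full interval $[0,1]$ slightly more transparent than the paper's appeal to $\gamma\in[0,\pi/2]$, where $t_{\alpha}(\gamma)$ is not monotone beyond $\theta_{\alpha}$. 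The one degenerate point worth flagging is $\alpha=1$, where your stationarity condition $\tan\mu=\frac{2\alpha}{\alpha^{2}-1}$ involves a division by zero; the conclusion still holds there since $\theta_{1}=\pi/2$ and $t_{1}(\mu)=\sin\mu$ is manifestly increasing on $[0,\pi/2]$.
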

\begin{proof}
The proof crucially relies on the fact that many of the operator inequalities used in proving Proposition~\ref{prop:biased-chsh-upper-bound} turn out to be tight for projective observables. Suppose that Alice and Bob share a two-qubit state and perform the measurements
\begin{gather*}
A_{0} = \sigma_{x}, \quad A_{1} = \cos \gamma \, \sigma_{x} + \sin \gamma \, \sigma_{y},\\
B_{0} = \sigma_{x}, \quad B_{1} = \sigma_{y}
\end{gather*}
for $\gamma \in [0, \pi/2]$. It is easy to check that
\begin{equation*}
W_{\alpha}^{2} = 2 \big[ ( \alpha^{2} + 1 ) + (\alpha^{2} - 1) \cos \gamma \big] \cdot \mathbb{1} \otimes \mathbb{1} + 4 \alpha \sin \gamma \, \sigma_{z} \otimes \sigma_{z}.
\end{equation*}
It is easy to see that the largest eigenvalue of $W_{\alpha}^{2}$ equals
\begin{equation*}
\lambda_{\gamma} := 2 \big[ ( \alpha^{2} + 1 ) + (\alpha^{2} - 1) \cos \gamma \big] + 4 \alpha \sin \gamma.
\end{equation*}
Since
\begin{equation*}
( \sigma_{z} \otimes \mathbb{1} ) W_{\alpha} ( \sigma_{z} \otimes \mathbb{1} ) = - W_{\alpha},
\end{equation*}
the spectrum of $W_{\alpha}$ is symmetric and so $W_{\alpha}$ must have an eigenvalue of $+ \sqrt{ \lambda_{\gamma} }$. Choosing the corresponding eigenstate leads to a realisation satisfying $\beta_{\alpha} = \sqrt{ \lambda_{\gamma} }$. On the other hand, it is easy to check that for these observables
\begin{equation*}
T_{\alpha} = 2 \big[ ( \alpha^{2} - 1 ) \cos \gamma + 2 \alpha \sin \gamma \big] \cdot \mathbb{1}
\end{equation*}
and
\begin{equation*}
t_{\alpha} = \frac{\alpha^{2} - 1}{2} \cos \gamma + \alpha \sin \gamma - \frac{ \alpha^{2} - 1}{2}.
\end{equation*}
Is it easy to verify that for this setup $\beta_{\alpha}^{2} = 4 ( \alpha^{2} + t_{\alpha} )$. Taking $\gamma \in [0, \pi/2]$ covers the specified range of $t_{\alpha}$.
\end{proof}
\section{The Mermin-Ardehali-Belinskii-Klyshko family}
\label{app:mabk}
Recall that we denote the observables of the $k$-th party by $A_{0}^{k}$ and $A_{1}^{k}$. For $n \geq 2$ the Bell operator $W_{n}$ is defined recursively as
\begin{equation*}
W_{n} := \frac{1}{2} \big( W_{n - 1} + W_{n - 1}' \big) \otimes A_{0}^{n} + \frac{1}{2} \big( W_{n - 1} - W_{n - 1}' \big) \otimes A_{1}^{n},
\end{equation*}
where $W_{1} = A_{0}^{1}$ and the primed operators are obtained by a local exchange of observables at every party, i.e.~$A_{0}^{j} \leftrightarrow A_{1}^{j}$ for all $j$.
\begin{widetext}
\begin{prop}
Let $\cS_{n}^{o}$ ($\cS_{n}^{e}$) be the set of $n$-bit strings of odd (even) parity. For all $n \geq 1$ the following operator inequalities hold:
\begin{equation*}
\abs[\big]{ [ W_{n}, W_{n}' ] } \leq 2 \comsum{o}
\end{equation*}
and $W_{n}^{2} \leq R_{n}, W_{n}'^{2} \leq R_{n}$ for
\begin{equation*}
R_{n} := \comsum{e}.
\end{equation*}
\end{prop}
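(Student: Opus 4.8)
The plan is to prove the three operator inequalities simultaneously by induction on $n$. It is convenient to write the recursion in terms of $P_n := \tfrac{1}{2}(A_0^n + A_1^n)$ and $M_n := \tfrac{1}{2}(A_0^n - A_1^n)$, so that $W_n = W_{n-1}\otimes P_n + W_{n-1}'\otimes M_n$ and, since priming swaps $A_0^n\leftrightarrow A_1^n$ (hence fixes $P_n$ and flips the sign of $M_n$), $W_n' = W_{n-1}'\otimes P_n - W_{n-1}\otimes M_n$. I record the elementary facts that drive the induction: $Q_n := P_n^2 + M_n^2 = \tfrac{1}{2}((A_0^n)^2 + (A_1^n)^2) \leq \mathbb{1}$; $[P_n, M_n] = -\tfrac{1}{2}[A_0^n, A_1^n]$, so that $\abs{[P_n, M_n]} = C_n$ with $C_n := \tfrac{1}{2}\abs{[A_0^n, A_1^n]}$; and the decompositions $R_n = R_{n-1}\otimes\mathbb{1} + R_{n-1}^{(o)}\otimes C_n$ and $R_n^{(o)} = R_{n-1}^{(o)}\otimes\mathbb{1} + R_{n-1}\otimes C_n$ obtained by splitting the parity classes according to the last bit (here $R_n^{(o)} := \comsum{o}$ is the odd-parity sum). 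All the $C_j$ act on different tensor factors and therefore commute, so $R_n$ and $R_n^{(o)}$ commute. The base case $n = 1$ is immediate, since $W_1 = A_0^1$, $W_1' = A_1^1$: $W_1^2 = (A_0^1)^2 \leq \mathbb{1} = R_1$, likewise $W_1'^2 \leq R_1$, and $\abs{[W_1, W_1']} = \abs{[A_0^1, A_1^1]} = 2C_1 = 2R_1^{(o)}$.

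For the inductive step I expand the three quantities using the recursion and the bookkeeping $XY = \tfrac{1}{2}\{X,Y\} + \tfrac{1}{2}[X,Y]$. A short calculation gives
\begin{align*}
[W_n, W_n'] &= [W_{n-1}, W_{n-1}']\otimes Q_n + \tfrac{1}{2}(W_{n-1}^2 + W_{n-1}'^2)\otimes[A_0^n, A_1^n], \\
W_n^2 &= W_{n-1}^2\otimes P_n^2 + W_{n-1}'^2\otimes M_n^2 \\
&\quad + \tfrac{1}{2}\{W_{n-1}, W_{n-1}'\}\otimes\{P_n, M_n\} + \tfrac{1}{2}[W_{n-1}, W_{n-1}']\otimes[P_n, M_n],
\end{align*}
and an analogous expression for $W_n'^2$ (with $\{P_n,M_n\}\mapsto -\{P_n,M_n\}$). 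One then feeds in the induction hypothesis together with the matrix-modulus toolkit: $\abs{Z}^2 = -Z^2$ for anti-Hermitian $Z$; $\pm X\otimes Y \leq \abs{X}\otimes\abs{Y}$ for Hermitian or anti-Hermitian $X, Y$; $Q_n \leq \mathbb{1}$; and $\abs{[P_n, M_n]} = C_n$. For instance the anticommutator-free part of $W_n^2$ is bounded by $R_{n-1}\otimes P_n^2 + R_{n-1}\otimes M_n^2 + R_{n-1}^{(o)}\otimes C_n = R_{n-1}\otimes Q_n + R_{n-1}^{(o)}\otimes C_n \leq R_{n-1}\otimes\mathbb{1} + R_{n-1}^{(o)}\otimes C_n = R_n$, matching the target exactly; the commutator bound and the primed bound are handled by similar manipulations (more delicate because of the ordering of non-commuting moduli inside $[W_n,W_n']^{\dagger}[W_n,W_n']$), ultimately estimated against $4(R_n^{(o)})^2$ via the identity $4(R_n^{(o)})^2 - 4(R_{n-1}^{(o)})^2\otimes\mathbb{1} - 4R_{n-1}^2\otimes C_n^2 = 8R_{n-1}R_{n-1}^{(o)}\otimes C_n$.

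The one genuinely delicate point is the term $\tfrac{1}{2}\{W_{n-1}, W_{n-1}'\}\otimes\{P_n, M_n\}$ in $W_n^2$, which has no counterpart on the right-hand side of $W_n^2 \leq R_n$. When the observables of party $n$ are projective we have $\{P_n, M_n\} = \tfrac{1}{2}((A_0^n)^2 - (A_1^n)^2) = 0$ and this term simply disappears, so the induction closes cleanly with coefficient \emph{one} --- in particular one must \emph{not} take the shortcut $W_n^2 \leq W_n^2 + W_n'^2$, which would only yield $W_n^2 \leq 2R_n$. For general binary observables the term has to be absorbed into the slack $R_{n-1}\otimes(\mathbb{1} - Q_n)$ produced by the estimate $Q_n \leq \mathbb{1}$, exploiting the constraints $(A_0^n)^2, (A_1^n)^2 \leq \mathbb{1}$ (equivalently $Q_n \pm \{P_n, M_n\} \leq \mathbb{1}$) together with a bound on $\{W_{n-1}, W_{n-1}'\}$ carried along as an extra inductive invariant, which one obtains from the elementary inequality $\abs{\{A_0, A_1\}}^2 + \abs{[A_0, A_1]}^2 \leq 4\mathbb{1}$. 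I expect verifying all of these operator inequalities --- and in particular making the cross term close with the correct constant --- to be the bulk of the proof; everything else is routine commutator algebra.
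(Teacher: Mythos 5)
Your setup, base case and both expansion identities are correct, and your bound on the ``anticommutator-free'' part of $W_n^2$ closes exactly as you say. However, the term you yourself flag as delicate, $\tfrac{1}{2}\{W_{n-1},W_{n-1}'\}\otimes\{P_n,M_n\}$, is a genuine gap rather than a verification that can be deferred: the proposition is stated for general binary observables, so this term does not vanish, and the plan of absorbing it into the slack $R_{n-1}\otimes(\mathbb{1}-Q_n)$ via an unspecified extra inductive invariant for $\{W_{n-1},W_{n-1}'\}$ is neither carried out nor likely to close in the form you describe: after estimating $\pm\{W_{n-1},W_{n-1}'\}\otimes\{P_n,M_n\}\le\abs{\{W_{n-1},W_{n-1}'\}}\otimes\abs{\{P_n,M_n\}}$ you would still have to compare $\abs{\{P_n,M_n\}}=\tfrac{1}{2}\abs{(A_0^n)^2-(A_1^n)^2}$ with $\mathbb{1}-Q_n$, and these operators do not commute, so even the single-site inequality is unclear (recall that $-Y\le X\le Y$ does not imply $\abs{X}\le Y$). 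The missing idea is a regrouping that eliminates the term altogether: expand in terms of $A_0^n,A_1^n$ rather than $P_n,M_n$,
\begin{align*}
W_{n}^{2} = {}& \tfrac{1}{4}\big(W_{n-1}+W_{n-1}'\big)^{2}\otimes\big(A_{0}^{n}\big)^{2} + \tfrac{1}{4}\big(W_{n-1}-W_{n-1}'\big)^{2}\otimes\big(A_{1}^{n}\big)^{2}\\
&+\tfrac{1}{4}\big(W_{n-1}^{2}-W_{n-1}'^{2}\big)\otimes\{A_{0}^{n},A_{1}^{n}\} - \tfrac{1}{4}[W_{n-1},W_{n-1}']\otimes[A_{0}^{n},A_{1}^{n}],
\end{align*}
so that the anticommutator of the Bell operators is hidden inside the manifestly positive squares $(W_{n-1}\pm W_{n-1}')^{2}$. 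The first two terms are then at most $\tfrac{1}{2}(W_{n-1}^2+W_{n-1}'^2)\otimes\mathbb{1}$; combining this with the third term as $\tfrac{1}{4}W_{n-1}^2\otimes(2\cdot\mathbb{1}+\{A_0^n,A_1^n\})+\tfrac{1}{4}W_{n-1}'^2\otimes(2\cdot\mathbb{1}-\{A_0^n,A_1^n\})$ and using $2\cdot\mathbb{1}\pm\{A_0^n,A_1^n\}\ge0$ together with $W_{n-1}^2,W_{n-1}'^2\le R_{n-1}$ gives exactly $R_{n-1}\otimes\mathbb{1}$, and the commutator term contributes at most $R_{n-1}^{(o)}\otimes C_n$, closing the induction without ever bounding $\{W_{n-1},W_{n-1}'\}$.

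The commutator inequality is also only asserted. Your route --- proving $[W_n,W_n']^{\dagger}[W_n,W_n']\le4\big(R_n^{(o)}\big)^2$ and taking operator square roots --- is sound in outline, but the step you wave at is the entire difficulty: squaring $D\otimes Q_n+S\otimes c$ (with $D=[W_{n-1},W_{n-1}']$, $S=\tfrac{1}{2}(W_{n-1}^2+W_{n-1}'^2)$, $c=[A_0^n,A_1^n]$) produces mixed terms $DS\otimes Q_nc+SD\otimes cQ_n$ in which neither pair commutes, and operator inequalities do not multiply, so matching these against the cross term $8R_{n-1}R_{n-1}^{(o)}\otimes C_n$ of your algebraic identity is not a ``similar manipulation'' but an open estimate. (The paper instead bounds the modulus of the sum directly by the sum of the moduli of the two anti-Hermitian summands; your instinct that this ordering issue deserves care is well founded, since the operator triangle inequality $\abs{X+Y}\le\abs{X}+\abs{Y}$ fails in general, but you have not supplied a working substitute.) Until the cross term in $W_n^2$ for non-projective observables and the commutator recursion are actually closed, the inductive step is incomplete.
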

\begin{proof}
Proof by induction. It is clear that both inequalities hold for $n = 1$, so we only need to show that if the inequalities hold for $n$, then they also hold for $n + 1$.

Expanding the commutator $[ W_{n + 1}, W_{n + 1}' ]$ gives
\begin{equation*}
[ W_{n + 1}, W_{n + 1}' ] = \frac{1}{2} \big( W_{n}^{2} + W_{n}'^{2} \big) \otimes \big[ A_{0}^{n + 1}, A_{1}^{n + 1} \big] + \frac{1}{2} [ W_{n}, W_{n}' ] \otimes \big( (A_{0}^{n + 1})^{2} + (A_{1}^{n + 1})^{2} \big).
\end{equation*}
The matrix modulus of the commutator can be upperbounded by
\begin{align*}
\abs[\big]{[ W_{n + 1}, W_{n + 1}' ]} &\leq \frac{1}{2} \big( W_{n}^{2} + W_{n}'^{2} \big) \otimes \abs[\big]{\big[ A_{0}^{n + 1}, A_{1}^{n + 1} \big]} + \abs[\big]{[ W_{n}, W_{n}' ]} \otimes \mathbb{1}\\
&\leq R_{n} \otimes \abs[\big]{\big[ A_{0}^{n + 1}, A_{1}^{n + 1} \big]} + \abs[\big]{[ W_{n}, W_{n}' ]} \otimes \mathbb{1},
\end{align*}
which completes the proof of the first inequality.

Since $R_{n}$ is invariant under the exchange of observables, it suffices to prove that $W_{n + 1}^{2} \leq R_{n + 1}$. Expanding $W_{n + 1}^{2}$ gives
\begin{align*}
W_{n + 1}^{2} &= \frac{1}{4} \big( W_{n} + W_{n}' \big)^{2} \otimes \big( A_{0}^{n + 1} \big)^{2} + \frac{1}{4} \big( W_{n} - W_{n}' \big)^{2} \otimes \big( A_{1}^{n + 1} \big)^{2}\\
&+ \frac{1}{4} \big( W_{n}^{2} - W_{n}'^{2} \big) \otimes \{ A_{0}^{n + 1}, A_{1}^{n + 1} \} - \frac{1}{4} [ W_{n}, W_{n}'] \otimes [A_{0}^{n + 1}, A_{1}^{n + 1}].
\end{align*}
We upperbound the first two terms using $(A_{0}^{n + 1})^{2}, (A_{1}^{n + 1})^{2} \leq \mathbb{1}$ to obtain
\begin{equation*}
\frac{1}{4} \big( W_{n} + W_{n}' \big)^{2} \otimes \big( A_{0}^{n + 1} \big)^{2} + \frac{1}{4} \big( W_{n} - W_{n}' \big)^{2} \otimes \big( A_{1}^{n + 1} \big)^{2} \leq \frac{1}{2} \big( W_{n}^{2} + W_{n}'^{2} \big) \otimes \mathbb{1}.
\end{equation*}
Adding the third term and applying $W_{n}^{2}, W_{n}'^{2} \leq R_{n}$ gives
\begin{align*}
\frac{1}{2} \big( &W_{n}^{2} + W_{n}'^{2} \big) \otimes \mathbb{1} + \frac{1}{4} \big( W_{n}^{2} - W_{n}'^{2} \big) \otimes \{ A_{0}^{n + 1}, A_{1}^{n + 1} \}\\
&= \frac{1}{4} W_{n}^{2} \otimes \big( 2 \cdot \mathbb{1} + \{ A_{0}^{n + 1}, A_{1}^{n + 1} \} \big) + \frac{1}{4} W_{n}'^{2} \otimes \big( 2 \cdot \mathbb{1} - \{ A_{0}^{n + 1}, A_{1}^{n + 1} \} \big)\\
&\leq \frac{1}{4} R_{n} \otimes \big( 2 \cdot \mathbb{1} + \{ A_{0}^{n + 1}, A_{1}^{n + 1} \} \big) + \frac{1}{4} R_{n} \otimes \big( 2 \cdot \mathbb{1} - \{ A_{0}^{n + 1}, A_{1}^{n + 1} \} \big) = R_{n} \otimes \mathbb{1}.
\end{align*}
For the last term we use
\begin{equation*}
- \frac{1}{4} [ W_{n}, W_{n}'] \otimes [A_{0}^{n + 1}, A_{1}^{n + 1}] \leq \frac{1}{4} \abs{ [ W_{n}, W_{n}'] } \otimes \abs{ [A_{0}^{n + 1}, A_{1}^{n + 1}] }.
\end{equation*}
Combining all the terms we obtain
\begin{equation*}
W_{n + 1}^{2} \leq R_{n} \otimes \mathbb{1} + \frac{1}{4} \abs{ [ W_{n}, W_{n}'] } \otimes \abs{ [A_{0}^{n + 1}, A_{1}^{n + 1}] },
\end{equation*}
which completes the proof of the second inequality.
\end{proof}
\end{widetext}
\begin{prop}
\label{prop:mabk-tight-trade-off}
The trade-off between the Bell violation $\beta_{n}$ and the measure of incompatibility for the $k$-th party $t_{k}$
\tradeoffmabk{}
is tight for $t_{k} \in [0, 1]$.
\end{prop}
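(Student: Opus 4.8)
The plan is to follow the same strategy as the proof of Proposition~\ref{prop:biased-chsh-tight-trade-off}: produce an explicit one-parameter family of qubit realisations, indexed by an angle $\gamma\in[0,\pi/2]$, in which the first $n-1$ parties act with anticommuting (maximally incompatible) observables while the certified party uses observables at relative angle $\gamma$, and then check directly that $\beta_n=\sqrt{2^{n-2}}\sqrt{1+t_k}$ with $t_k=\sin\gamma$. Since $W_n$ is symmetric under permutations of the parties it suffices to handle $k=n$. Concretely I would set $A_0^1=\sigma_x$, $A_1^1=\sigma_y$; $A_0^j=\tfrac{1}{\sqrt{2}}(\sigma_x-\sigma_y)$ and $A_1^j=\tfrac{1}{\sqrt{2}}(\sigma_x+\sigma_y)$ for $2\le j\le n-1$ (the standard MABK-optimal observables); and $A_0^n=\sigma_x$, $A_1^n=\cos\gamma\,\sigma_x+\sin\gamma\,\sigma_y$.

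The whole computation lives on the two-dimensional ``GHZ subspace'' $\mathcal{G}_m:=\vecspan\{\ket{0}^{\otimes m},\ket{1}^{\otimes m}\}$. The point is that every observable above is a real linear combination of $\sigma_x$ and $\sigma_y$, so every term in the expansion of $W_m$ is a full $m$-fold tensor product of such operators; each such product maps $\ket{0}^{\otimes m}$ to a phase times $\ket{1}^{\otimes m}$ and vice versa, so $W_m$ leaves $\mathcal{G}_m$ invariant and is off-diagonal in the basis $(\ket{0}^{\otimes m},\ket{1}^{\otimes m})$. Hence $W_m$ restricted to $\mathcal{G}_m$ is encoded by the single complex number $z_m:=\bra{1}^{\otimes m}W_m\ket{0}^{\otimes m}$, and $W_m'$ by $z_m'$. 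Substituting the recursion $W_{m+1}=\tfrac12(W_m+W_m')\otimes A_0^{m+1}+\tfrac12(W_m-W_m')\otimes A_1^{m+1}$ into this picture, and using $(\cos\phi\,\sigma_x+\sin\phi\,\sigma_y)\ket{0}=e^{i\phi}\ket{1}$, yields the scalar recursion
\begin{equation*}
z_{m+1}=\tfrac12(z_m+z_m')\,e^{i\mu_{m+1}}+\tfrac12(z_m-z_m')\,e^{i\nu_{m+1}},
\end{equation*}
where $A_0^{m+1}=\cos\mu_{m+1}\,\sigma_x+\sin\mu_{m+1}\,\sigma_y$ and $A_1^{m+1}=\cos\nu_{m+1}\,\sigma_x+\sin\nu_{m+1}\,\sigma_y$, with the analogous formula for $z_{m+1}'$ obtained by exchanging $\mu\leftrightarrow\nu$ and $z\leftrightarrow z'$. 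A short induction using the angles $(\mu_j,\nu_j)=(-\pi/4,\pi/4)$ for $2\le j\le n-1$ then gives $z_m=\sqrt{2^{m-1}}$ and $z_m'=i\sqrt{2^{m-1}}$ for $1\le m\le n-1$ (the step is elementary, using $1\pm i=\sqrt{2}\,e^{\pm i\pi/4}$); plugging in $m=n-1$, $\mu_n=0$, $\nu_n=\gamma$ gives $z_n=\tfrac12\sqrt{2^{n-1}}\bigl(e^{i\pi/4}+e^{i(\gamma-\pi/4)}\bigr)$, hence $|z_n|^2=2^{n-2}(1+\sin\gamma)$.

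To conclude: $W_n$ restricted to $\mathcal{G}_n$ is a $2\times2$ Hermitian matrix with zero diagonal, so it has eigenvalues $\pm|z_n|$; picking the eigenvector $\ket{\psi}\in\mathcal{G}_n$ with eigenvalue $+|z_n|$ gives a realisation with $\beta_n=|z_n|=\sqrt{2^{n-2}}\sqrt{1+\sin\gamma}$. On the other hand $[A_0^n,A_1^n]=2i\sin\gamma\,\sigma_z$, so $\abs{[A_0^n,A_1^n]}=2\sin\gamma\cdot\mathbb{1}$ and therefore $t_n=\tfrac12\tr\bigl(\abs{[A_0^n,A_1^n]}\rho_A\bigr)=\sin\gamma$, irrespective of the reduced state $\rho_A$. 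Thus $\beta_n=\sqrt{2^{n-2}}\sqrt{1+t_n}$, and letting $\gamma$ range over $[0,\pi/2]$ makes $t_n=\sin\gamma$ sweep the whole interval $[0,1]$, which is the claimed tightness.

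I expect the only real work to be the bookkeeping of phases in the scalar recursion through the induction --- equivalently, checking that at each level the two combinations $\tfrac12(W_m\pm W_m')$ restrict on $\mathcal{G}_m$ to equal multiples of two anticommuting Pauli operators, which is exactly what turns the final step into an honest CHSH calculation. Indeed, one could instead observe that on $\mathcal{G}_{n-1}\otimes\amsbb{C}^2$ the operator $W_n$ equals $\sqrt{2^{n-3}}$ times an ordinary CHSH Bell operator (with anticommuting observables on the $\mathcal{G}_{n-1}$ factor) and then simply quote Proposition~\ref{prop:biased-chsh-tight-trade-off} with $\alpha=1$; no ideas beyond the CHSH case are needed.
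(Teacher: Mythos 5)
Your construction is correct and does establish tightness, and it shares the paper's overall strategy (an explicit one‑parameter family of qubit observables with the certified party tilted by $\gamma$ so that $t_k=\sin\gamma$, evaluated on the GHZ subspace), but the computational engine is genuinely different. The paper quotes the product formula $W_{n}^{2}=\sum_{x\in\cS_{n}^{e}}\bigotimes_{j}\big(i[A_{0}^{j},A_{1}^{j}]/2\big)^{x_{j}}$ for projective observables, reads off the top eigenvalue $2^{n-2}(1+\sin\gamma)$ of $W_n^2$, and then needs the spectral symmetry of $W_n$ to convert this into an eigenvalue $+\sqrt{\lambda_\gamma}$ of $W_n$ itself. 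You instead track $W_n$ directly through the recursion on the invariant subspace $\vecspan\{\ket{0}^{\otimes n},\ket{1}^{\otimes n}\}$, where it is off-diagonal, so the eigenvalues $\pm\abs{z_n}$ fall out immediately; I have checked your induction ($z_m=\sqrt{2^{m-1}}$, $z_m'=i\sqrt{2^{m-1}}$) and the final evaluation $\abs{z_n}^2=2^{n-2}(1+\sin\gamma)$, and they are right, as is $t_n=\sin\gamma$ since $\abs{[A_0^n,A_1^n]}=2\sin\gamma\cdot\mathbb{1}$. What your route buys is self-containedness (no external formula for $W_n^2$, no spectral-symmetry step); what it costs is the appeal to permutation symmetry of $W_n$ to reduce to $k=n$, which is true --- it follows from $W_n+iW_n'=\sqrt{2}\,e^{i\pi/4}\bigotimes_{j}\tfrac{1}{\sqrt{2}}\big(e^{-i\pi/4}A_0^j+e^{i\pi/4}A_1^j\big)$ --- but is not proved here; you could sidestep it by placing the tilted pair at position $k$ and running the same induction. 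One small slip in your closing aside: the prefactor should be $\sqrt{2^{n-4}}$ rather than $\sqrt{2^{n-3}}$, since $W_{n-1}$ restricted to the GHZ subspace is $\sqrt{2^{n-2}}$ times a Pauli and the effective CHSH operator therefore carries the coefficient $\tfrac{1}{2}\sqrt{2^{n-2}}$; with $\sqrt{2^{n-3}}$ the resulting maximum would exceed $\beta_n^{Q}=\sqrt{2^{n-1}}$. This does not affect the main argument.
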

\begin{proof}
The proof relies on the fact that for projective observables $W_{n}^{2}$ admits a particularly simple form~[64]
\begin{equation*}
W_{n}^{2} = \sum_{x \in \cS_{n}^{e}} \bigotimes_{j = 1}^{n} \bigg( \frac{ i [ A_{0}^{j}, A_{1}^{j} ] }{2} \bigg)^{x_{j}}.
\end{equation*}
Consider the qubit observables
\begin{align*}
A_{0}^{j} &= \sigma_{x},\\
A_{1}^{j} &=
\begin{cases}
\cos \gamma \, \sigma_{x} + \sin \gamma \, \sigma_{y} \nbox{if} j = k,\\
\sigma_{y} \nbox{otherwise}
\end{cases}
\end{align*}
for $\gamma \in [0, \pi/2]$. It is easy to see that the largest eigenvalue of $W_{n}^{2}$ equals
\begin{equation*}
\lambda_{\gamma} = 2^{n - 2} ( 1 + \sin \gamma )
\end{equation*}
and corresponds to the subspace spanned by $\ket{0}^{\otimes n}$ and $\ket{1}^{\otimes n}$. Since the spectrum of $W_{n}$ is symmetric, it must have an eigenvalue of $+ \sqrt{ \lambda_{\gamma} }$. It is easy to check that $t_{k} = \sin \gamma$ and so choosing $\gamma \in [0, \pi/2]$ saturates the trade-off in the desired range.
\end{proof}
\section{Three anticommuting observables}
\label{app:three-observables}
\begin{prop}
Let $A_{0}, A_{1}, A_{2}$ be three binary observables and a let $\rho_{A}$ be a full-rank state such that
\begin{equation}
\label{eq:three-commutators}
\tr \big( \abs{ [ A_{j}, A_{k} ] } \rho_{A} \big) = 2
\end{equation}
for all $j \neq k$. Then, there exists a unitary $U_{A}$ such that
\begin{align*}
A_{0} &= U_{A} ( \sigma_{x} \otimes \mathbb{1} ) U_{A}^{\dagger},\\
A_{1} &= U_{A} ( \sigma_{y} \otimes \mathbb{1} ) U_{A}^{\dagger},\\
A_{2} &= U_{A} ( \sigma_{z} \otimes \Upsilon ) U_{A}^{\dagger},
\end{align*}
where $\Upsilon$ is a Hermitian projective observable, i.e.~$\Upsilon = \Upsilon^{\dagger}$ and $\Upsilon^{2} = \mathbb{1}$.
\end{prop}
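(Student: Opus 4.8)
The strategy is to reduce the statement to the already-established characterisation of a maximally incompatible \emph{pair} of observables, Proposition~\ref{prop:biased-chsh-optimal-observables}, and then to pin down the third observable by an elementary computation in the qubit Pauli basis.

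First I would upgrade the three scalar conditions~\eqref{eq:three-commutators} to operator identities. For binary observables $\norm{\abs{[A_j,A_k]}}_\infty = \norm{[A_j,A_k]}_\infty \leq 2$, so by the H\"older inequality $\tr\big(\abs{[A_j,A_k]}\,\rho_A\big) \leq \norm{\abs{[A_j,A_k]}}_\infty \leq 2$; equality forces $\norm{\abs{[A_j,A_k]}}_\infty = 2$, and combined with the full-rankness of $\rho_A$ this yields
\begin{equation*}
\abs{[A_j,A_k]} = 2 \cdot \mathbb{1} \qquad \text{for all } j \neq k .
\end{equation*}
Feeding this into inequality~\eqref{eq:com-anticom} gives $\abs{\{A_j,A_k\}}^{2} \leq 0$, hence $\{A_j,A_k\} = 0$, and since~\eqref{eq:com-anticom} then holds with equality each observable is projective, $A_j^{2} = \mathbb{1}$. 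Thus $A_0, A_1, A_2$ are pairwise anticommuting projective observables.

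Next I would apply Proposition~\ref{prop:biased-chsh-optimal-observables} with $\alpha = 1$ to the pair $(A_0, A_1)$: its hypothesis $\tr(T_1 \rho_A) = 4$ is exactly $\tr\big(\abs{[A_0,A_1]}\,\rho_A\big) = 2$, and $\theta_1 = \arccos 0 = \pi/2$, so there is a unitary $V$ with $V A_0 V^{\dagger} = \sigma_x \otimes \mathbb{1}$ and $V A_1 V^{\dagger} = \sigma_y \otimes \mathbb{1}$. Writing $A_2' := V A_2 V^{\dagger}$, which is projective and anticommutes with both $\sigma_x \otimes \mathbb{1}$ and $\sigma_y \otimes \mathbb{1}$, I would expand $A_2' = \mathbb{1} \otimes L_0 + \sigma_x \otimes L_x + \sigma_y \otimes L_y + \sigma_z \otimes L_z$ with $L_0, L_x, L_y, L_z$ Hermitian. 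The relation $\{A_2', \sigma_x \otimes \mathbb{1}\} = 0$ forces $L_0 = L_x = 0$, and $\{A_2', \sigma_y \otimes \mathbb{1}\} = 0$ further forces $L_y = 0$, so $A_2' = \sigma_z \otimes L_z$; then $(A_2')^{2} = \mathbb{1} \otimes L_z^{2} = \mathbb{1}$ gives $L_z^{2} = \mathbb{1}$. Setting $\Upsilon := L_z$ and $U_A := V^{\dagger}$ then gives the claimed form.

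I do not expect a real obstacle here. Proposition~\ref{prop:biased-chsh-optimal-observables} already does the heavy lifting of characterising a maximally incompatible pair (including the appearance of the auxiliary register), and the only genuinely new ingredient is the short Pauli-basis argument for $A_2'$. The step deserving the most care is the very first one, namely converting the trace conditions~\eqref{eq:three-commutators} into the operator identities $\abs{[A_j,A_k]} = 2 \cdot \mathbb{1}$ and then extracting projectiveness and pairwise anticommutation; but this repeats the tightness-of-H\"older reasoning already used for CHSH in the main text, so it should go through without surprises.
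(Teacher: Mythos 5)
Your proof is correct and follows essentially the same route as the paper's: apply Proposition~\ref{prop:biased-chsh-optimal-observables} with $\alpha=1$ to the pair $(A_{0},A_{1})$, then expand the rotated $A_{2}$ in the Pauli basis and use the pairwise anticommutation (obtained from tightness of H\"older plus inequality~\eqref{eq:com-anticom}) to kill the $\mathbb{1}$, $\sigma_{x}$ and $\sigma_{y}$ components and force $K_{z}^{2}=\mathbb{1}$. The only cosmetic difference is that you establish the operator identities $\abs{[A_{j},A_{k}]}=2\cdot\mathbb{1}$ and projectiveness for all pairs up front, whereas the paper invokes them mid-argument.
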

\begin{proof}
Applying Proposition~\ref{prop:biased-chsh-optimal-observables} for $\alpha = 1$ to the pair $A_{0}, A_{1}$ yields a unitary $U_{A}$ such that
\begin{align*}
A_{0} = U_{A} ( \sigma_{x} \otimes \mathbb{1} ) U_{A}^{\dagger},\\
A_{1} = U_{A} ( \sigma_{y} \otimes \mathbb{1} ) U_{A}^{\dagger}.
\end{align*}
Hence, we only need to ensure the correct form of $A_{2}$. Since $\{ \mathbb{1}, \sigma_{x}, \sigma_{y}, \sigma_{z} \}$ constitute an operator basis for linear operators acting on $\amsbb{C}^{2}$, we can without loss of generality write
\begin{equation*}
U_{A}^{\dagger} A_{2} U_{A} = \mathbb{1} \otimes K_{0} + \sigma_{x} \otimes K_{x} + \sigma_{y} \otimes K_{y} + \sigma_{z} \otimes K_{z}
\end{equation*}
for some Hermitian operators $K_{0}, K_{x}, K_{y}, K_{z}$. We have shown before that if the commutator is maximal and the state is full-rank, then we must have $\abs{ [A_{j}, A_{k}] } = 2 \cdot \mathbb{1}$ and therefore $\{ A_{j}, A_{k} \} = 0$. On the other hand, computing the anticommutator directly gives
\begin{equation*}
U_{A}^{\dagger} \{ A_{0}, A_{2} \} U_{A} = \sigma_{x} \otimes K_{0} + \mathbb{1} \otimes K_{x},
\end{equation*}
which immediately implies that $K_{0} = K_{x} = 0$. An analogous argument applied to $A_{1}$ and $A_{2}$ gives $K_{y} = 0$. Therefore, we are left with
\begin{equation*}
U_{A}^{\dagger} A_{2} U_{A} = \sigma_{z} \otimes K_{z}.
\end{equation*}
Since the commutator being maximal implies that $A_{2}$ is projective, we have $K_{z}^{2} = \mathbb{1}$, which completes the proof.
\end{proof}
It is easy to see that this argument can be applied recursively to derive the exact form for an arbitrary number of observables. Let $A_{0}, A_{1}, \ldots, A_{n - 1}$ be $n$ observables such that for $j \neq k$
\begin{equation*}
\tr \big( \abs{ [ A_{j}, A_{k} ] } \rho_{A} \big) = 2,
\end{equation*}
where $\rho_{A}$ is full-rank. If $n$ is even, then there exists a unitary $U_{A}$ such that
\begin{equation*}
U_{A}^{\dagger} A_{j} U_{A} =
\begin{cases}
\sigma_{z}^{\otimes \lfloor j/2 \rfloor} \otimes \sigma_{x} \otimes \mathbb{1} \nbox[4]{for even} j,\\
\sigma_{z}^{\otimes \lfloor j/2 \rfloor} \otimes \sigma_{y} \otimes \mathbb{1} \nbox[4]{for odd} j,
\end{cases}
\end{equation*}
where the last identity operator acts on all the remaining registers. Clearly, we have recovered the well-known construction of $n$ anticommuting observables on $n/2$ qubits.

If $n$ is odd, the first $n - 1$ observables take the same form as before, while the last one is given by
\begin{equation*}
U_{A}^{\dagger} A_{n} U_{A} = \sigma_{z}^{\otimes (n - 1)/2} \otimes \Upsilon,
\end{equation*}
where $\Upsilon$ is an arbitrary projective observable. Hence, the ambiguity related to the free projective observable $\Upsilon$ appears only for odd values of $n$.
\bibliographystyle{alphaarxiv}
\bibliography{/home/jedrek/projekty/tex/library}
\end{document}